\declaretheorem[name=Observation]{observation}
\newcommand{\obsref}[1]{%
  \hyperref[#1]{Observation~\ref*{#1}}%
}
\NewDocumentCommand{\smartstar}{m}{%
\edef\temp{%
\noexpand\ifbool{inappendix}{%
\noexpand\hyperref[#1]{\appsymb}%
}{\noexpand\hyperref[proof:#1]{\appsymb}%
}%
}%
\temp%
}%
\newif\iflong
\newif\ifshort
\newcommand{\appsymb}{$\bigstar$}
\newcommand{\toappendix}[1]{%
  \iflong{}#1\else{}
    \gappto{\appendixText}
    {
        #1
      }
  \fi{}%
}
\newcommand{\notappendixproof}[3]{#3}
\newcommand{\executecommand}[1]{%
    \csname #1\endcsname
}
\newcommand{\appendixproof}[3]{%
  \iflong{}#3\else{}\gappto{\appendixText}
    {
      \subsection{Proof of \cref{#1}}\label{proof:#1}
      \ifshort \csname#2\endcsname*\fi
      #3
    }
  \fi{}
}
\def\hmargin{0.0em}
\newcommand{\problembox}[3]{
  \par\addvspace{\abovedisplayskip}
  \noindent
  \makebox[\textwidth]{%
    \fboxsep=0.5em
    \fbox{%
      \begin{minipage}{\dimexpr\textwidth-2\fboxsep-2\fboxrule-\hmargin\relax}
        \textsc{#1}\par
        \smallskip
        \begin{tabularx}{\textwidth}{@{}l@{\hspace{0.5em}}X@{}}
          \textbf{Input:} & #2 \\
          \textbf{Question:} & #3
        \end{tabularx}
        \vspace{-0.33em}
      \end{minipage}%
    }%
  }
  \par\addvspace{\belowdisplayskip}
}
\crefname{observation}{observation}{observations}
\Crefname{observation}{Observation}{Observations}
\newcommand{\probname}[1]{{\normalfont\textsc{#1}}}
\newcommand{\Size}[1]{\ensuremath{\left\vert #1 \right\vert}}
\newcommand{\Set}[1]{\ensuremath{\left\{ #1 \right\} }}
\newcommand{\NP}{\textsf{\textup{NP}}}
\newcommand{\paraNP}{\textsf{\textup{paraNP}}}
\newcommand{\FPT}{\textsf{\textup{FPT}}}
\newcommand{\XP}{\textsf{\textup{XP}}}
\newcommand{\ETR}{\textsf{\textup{ETR}}}
\newcommand{\W}[1]{\textsf{\textup{W[{#1}]}}}
\newcommand{\BigO}[1]{\ensuremath{\mathcal{O}(#1)}}
\def\fenkernelsizeexpr{10k \cdot 81^k}
\newcommand{\drawing}[1]{geometric $#1$-layer drawing}
\newcommand{\convexhull}[0]{\ensuremath{\text{ConvexHull}}}
\newcommand{\disk}[2]{\ensuremath{\text{Disk}(#1, #2)}}
\newcommand{\cupdot}{\mathbin{\mathaccent\cdot\cup}}
\newcommand{\GT}{\textup{\textsc{GT}}\xspace}
\newcommand{\GTfull}{\textup{\textsc{Geometric Thickness}}\xspace}
\newcommand{\GTE}{\textup{\textsc{GTE}}\xspace}
\newcommand{\GTEfull}{\textup{\textsc{Geometric Thickness Extension}}\xspace}
\begin{document}
\title{Pathways to Tractability for Geometric Thickness\thanks{All authors acknowledge support from the Vienna Science and Technology Fund (WWTF)
[10.47379/ICT22029]. Robert Ganian and Alexander Firbas also acknowledge support from the Austrian Science Fund (FWF) [10.55776/Y1329].}}

\author{Thomas Depian \and
Simon Dominik Fink  \and
Alexander Firbas  \and
Robert Ganian \and
Martin	N\"ollenburg
}

\authorrunning{T. Depian  et al.}
\institute{TU Wien, Vienna, Austria\\
\email{\{tdepian,sfink,afirbas,rganian,noellenburg\}@ac.tuwien.ac.at}}
\maketitle              %
\begin{abstract}
We study the classical problem of computing geometric thickness, i.e., finding a straight-line drawing of an input graph and a partition of its edges into as few parts as possible so that each part is crossing-free. Since the problem is \NP-hard, we investigate its tractability through the lens of parameterized complexity. 
As our first set of contributions, we provide two fixed-parameter algorithms which utilize well-studied parameters of the input graph, notably the vertex cover and feedback edge numbers. Since parameterizing by the thickness itself does not yield tractability and the use of other structural parameters remains open due to general challenges identified in previous works, as our second set of contributions, we propose a different pathway to  tractability for the problem: extension of partial solutions. In particular, we establish a full characterization of the problem's parameterized complexity in the extension setting depending on whether we parameterize by the number of missing vertices, edges, or both. 
\end{abstract}

\setbool{inintro}{true}
\section{Introduction}
The \emph{thickness} of a graph $G$ is the minimum integer $\ell$ such that there exists a drawing $\Gamma$ of $G$ and a partitioning of its edges into $\ell$ layers such that no two edges assigned to the same layer cross in $\Gamma$.
Thickness is a classical generalization of graph planarity;
its early study dates back to the seventies~\cite{tutte1963thickness,beineke1965thickness} and yet it remains a prominent topic of research to this day~\cite{DurocherGM16,DurocherM18,CheongPS22,0015RR023}.
An overview of classical results on thickness and its initial applications can be found, e.g., in the dedicated survey of Mutzel, Odenthal, and Scharbrodt~\cite{MutzelOS98}.

While the distinction of whether $\Gamma$ represents edges as curves or line segments is immaterial when determining whether $G$ is planar (i.e., whether it has thickness $1$), this is far from true for higher thickness values. In fact, a seminal paper of Eppstein established that the gap between \emph{geometric thickness} (with edges represented as straight-line segments) and \emph{graph-theoretic thickness} (where edges are curves) can be arbitrarily large~\cite{Eppstein02}. Both of these notions are now often studied independently, with a number of articles focusing on various combinatorial properties of geometric or graph-theoretic thickness~\cite{DujmovicW07,DurocherGM16,DurocherM18,0015RR023}.

\looseness=-1
In terms of computational complexity, determining whether a graph has graph-theoretic thickness $2$ was shown to be \NP-hard already 40 years ago~\cite{mansfield1983determining}. The analogous question for geometric thickness was resolved only in 2016~\cite{DurocherGM16}---16 years after it was posed as an open question by Dillencourt, Eppstein, and Hirschberg~\cite{DillencourtEH00}; see also the selected open questions in the proceedings of GD 2003~\cite{brandenburg2004selected}. 
While this sets both problems on the same level in terms of lower bounds,
there still remains a huge gap between the two notions in the complementary setting of identifying the boundaries of tractability for computing the thickness of a graph. 

Indeed, in 2007 Dujmovi\'{c} and Wood~\cite{DujmovicW07} showed that both notions of thickness are upper-bounded by $\lceil \frac{k}{2}\rceil$ in every graph of \emph{treewidth} $k$. For graph-theoretic thickness, this result in combination with the well-established machinery of Courcelle's Theorem~\cite{Courcelle90} immediately implies that graph-theoretic thickness can be computed in linear time on all graphs of bounded treewidth; stated in terms of the more modern \emph{parameterized complexity} paradigm, the problem is \emph{fixed-parameter tractable} when parameterized by treewidth.
While this provides a robust and classical graph parameter that can be exploited to compute graph-theoretic thickness, the same approach cannot be replicated when aiming for geometric thickness. In fact, no parameterized algorithms were previously known for computing the geometric thickness at all. The overarching aim of this article is to change this via a detailed investigation of the parameterized complexity of computing geometric thickness. 

\subsection{Contributions}
There are two classical perspectives through which one typically applies parameterized analysis to identify tractable fragments for a problem of interest: parameterizing by the solution quality (e.g., the size of a sought-after set), or by structural properties of the input (such as various graph parameters). The \NP-hardness of verifying whether a graph has geometric thickness $2$~\cite{DurocherGM16} effectively rules out the former approach, and so we begin our investigation by focusing on the second perspective. 

\emph{Treewidth}~\cite{RobertsonS86} typically represents a natural first choice for a structural graph parameter one could use to overcome the general intractability of a problem of interest. Indeed, there are numerous examples of combinatorial problems that are known to be fixed-parameter tractable parameterized by treewidth, and the same holds for the aforementioned problem of computing graph-theoretic thickness. Yet, when dealing with graph problems that involve geometric aspects---such as the computation of the \emph{obstacle number}~\cite{BalkoCG00V022}
or \emph{Right-Angle Crossing (RAC) drawings}~\cite{BrandGRS23}---one often finds that treewidth and other ``decomposition-based'' graph parameters (including, e.g., \emph{pathwidth}~\cite{RobertsonS83} and \emph{treedepth}~\cite{sparsity}) are incredibly challenging to use. We refer readers to the survey of Zehavi~\cite{Zehavi22} for further examples of the many open questions surrounding the algorithmic application of such parameters in the context of graph drawing.

Given the above, for our first result we instead consider the computation of the geometric thickness when parameterized by the \emph{vertex cover number}---i.e., the size of a minimum vertex cover of the input graph. In particular, we show:
\begin{restatable}{theorem}{fptvcn}
\label{thm:fptvcn}
\GTfull is fixed-parameter tractable when parameterized by the vertex cover number of the input graph.
\end{restatable}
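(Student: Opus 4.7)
The plan is to exploit the structural dichotomy between a minimum vertex cover $C$ and the independent set $I := V \setminus C$. First I would compute a minimum vertex cover $C$ of size at most $k$ in FPT time via the classical parameterized algorithm for \textsc{Vertex Cover}, and classify the vertices of $I$ into at most $2^k$ \emph{types} according to their neighborhood in $C$; for each $T \subseteq C$, let $I_T$ denote the set of $v \in I$ with $N(v) = T$. Vertices within the same $I_T$ are indistinguishable from the perspective of $G$, and this is the source of tractability that I will exploit.

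As a preliminary step, I would establish an upper bound of the form $\ell^\star \leq f(k)$ on the geometric thickness of any graph of vertex cover number $k$. This can be obtained via an explicit construction: place $C$ in convex position and partition the at most $\binom{k}{2}$ edges of $G[C]$ into $\lceil k/2 \rceil$ crossing-free layers; then, for each type $T$, place the vertices of $I_T$ in a narrow dedicated sector, using at most $|T| \leq k$ further layers for their incident edges. This bound lets us restrict the search to drawings with at most $f(k)$ layers and, in particular, enumerate in FPT time all \emph{skeletons} consisting of (i) the order type of the drawing of $C$ and (ii) a layer assignment of the edges in $G[C]$.

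For a fixed skeleton, the $k$ cover vertices together with the straight-line drawing of $G[C]$ partition the plane into $\mathcal{O}(k^4)$ cells, and for each type $T$ and each cell, a vertex of type $T$ placed in that cell admits only boundedly many combinatorially distinct routings and valid layer assignments of its incident edges. Each valid triple (type, cell, local coloring) is a \emph{configuration}, and there are at most $g(k)$ configurations overall. I would then set up an integer linear program with a nonnegative integer variable $x_\sigma$ for each configuration $\sigma$, subject to coverage constraints of the form $\sum_\sigma x_\sigma = |I_T|$ (summing over configurations of type $T$) and pairwise \emph{compatibility} constraints that exclude combinations of configurations whose edge bundles would cross on a shared layer. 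Since the number of variables depends only on $k$, Lenstra's algorithm solves this ILP in FPT time.

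The principal obstacle will be justifying that \emph{pairwise} compatibility of configurations already suffices to certify the existence of a crossing-free layered drawing, irrespective of the chosen multiplicities $x_\sigma$. The key geometric ingredient will be a \emph{clustering lemma}: for each used configuration, one places its assigned vertices arbitrarily close together at a designated interior point of the corresponding cell, so that the bundle of edges from the cluster to each fixed $c \in C$ forms a narrow cone whose crossing behaviour with any other configuration's bundle coincides with that of a single representative edge. Once this geometric argument is in place, the pairwise ILP constraints faithfully capture the global feasibility of the drawing, and the claimed FPT running time follows.
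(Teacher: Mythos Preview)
Your clustering lemma has a genuine gap: it addresses crossings between bundles of \emph{different} configurations, but not crossings \emph{within} a single bundle. If a configuration assigns the same colour to two edges $va$ and $vb$, then for two clustered vertices $u,w$ the segments $ua$ and $wb$ share that colour and may cross. Concretely, take a vertex of type $\{a,b,c\}$ whose three edges all receive the same colour and whose cell is the interior of the triangle $abc$: the paper's Observation~\ref{obs:drawing_maps_clones_in_set} says that a clone $w$ avoids such a crossing with $u$ precisely when $w\in T_\Gamma(u,a,b)$, and Lemma~\ref{lemma:local_clonable_characterisation} shows that cloneability inside the cell is equivalent to the admissible region $A_{\Gamma,\chi}(u)=\bigcap T_\Gamma(u,\cdot,\cdot)$ being nonempty. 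In the triangle example this intersection of three tie shapes is empty for every position of $u$ inside the triangle (equivalently, $K_{2,3}$ has no planar drawing with both degree-$3$ vertices inside the triangle on the other side). Hence there exist realizable configurations that \emph{cannot} be used with multiplicity $\ge 2$, and your ILP would wrongly declare such choices feasible.

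A secondary issue is that cloneability, and more generally the simultaneous realizability of several configurations with their chosen multiplicities, is not determined by the order type of $C$ alone but by the actual angles, so enumerating order types does not fix enough geometric data for the ILP's feasibility to correspond to the existence of a drawing. (Also, ``exclude incompatible pairs'' is a disjunctive constraint; you would need auxiliary indicator variables or a further branching over used configuration subsets.)

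The paper avoids all of this by running the argument in the opposite direction. Instead of synthesizing a drawing from configurations, it kernelizes: if a neighbourhood class is large, delete any vertex $x$ from it. Safety in the nontrivial direction starts from an arbitrary solution for $G-x$ and uses pigeonhole over colourings and over the $\mathcal{O}(k^2)$ cells to find two clones $c_1,c_2$ in the same cell; the mere existence of $c_2$ \emph{witnesses} $A_{\Gamma,\chi}(c_1)\neq\emptyset$, so Lemma~\ref{lemma:local_clonable_characterisation} lets one reinsert $x$ as a further clone of $c_1$. The tie-shape/admissible-region machinery is exactly the missing ingredient in your clustering lemma, but in the kernelization proof it is only needed to reinsert a single vertex rather than to justify arbitrary multiplicities, which sidesteps the realizability issues above entirely.
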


We prove \cref{thm:fptvcn} through the kernelization technique, which iteratively reduces the size of the input instance while preserving its geometric thickness. Kernelization may be considered the ``staple'' approach for vertex cover number, but its application here is non-trivial and specific to the problem at hand: it relies on a sequence of arguments establishing that every bounded-thickness drawing of a sufficiently large instance must contain a certain configuration of ``similar'' vertices drawn in the plane. We use this to prove that the geometric thickness will not decrease if we remove a specific vertex from the graph.

For our second result, we turn towards the \emph{feedback edge number}---i.e., the edge deletion distance to acyclicity---as a second candidate parameter. The feedback edge and vertex cover numbers can be seen as complementary to each other: the two parameters are pairwise incomparable and form basic restrictions on the structure of the input graph. We show:
\begin{restatable}{theorem}{fptfen}
\label{thm:fptfen}
\GTfull is fixed-parameter tractable when parameterized by the feedback edge number of the input graph.
\end{restatable}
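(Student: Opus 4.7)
The plan is to establish fixed-parameter tractability via a kernelization-style argument: reduce the instance $G$ (of feedback edge number $k$) to an equivalent instance $G'$ whose size is bounded by a function of $k$, then solve $G'$ by brute force. First, I would exploit the structure of $G$: a spanning forest $F$ satisfies $|E(G) \setminus E(F)| = k$, and a standard counting argument shows that $G$ contains at most $O(k)$ branching vertices (of degree $\geq 3$) and at most $O(k)$ maximal paths whose internal vertices all have degree exactly $2$. Contracting each such path to a single edge yields a multigraph $H$ on $O(k)$ vertices and $O(k)$ edges, and $G$ can be recovered from $H$ by re-subdividing each contracted edge of $H$ to the appropriate path length.

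Next, I would introduce a reduction rule that shortens long degree-$2$ paths: whenever a maximal degree-$2$ path has length exceeding a threshold $L(k)$, replace it with a path of length exactly $L(k)$ between the same endpoints. Safety of the rule rests on two ingredients. Ingredient~(i): placing additional vertices on a straight-line segment representing an edge does not create crossings with other edges, since the new subedges are colinear with the original and may inherit the original edge's layer; hence subdividing an edge can only decrease the geometric thickness, giving $\text{gt}(G) \leq \text{gt}(G')$. Ingredient~(ii): in any straight-line drawing of $H$, any two edges cross in at most one point, so each edge of $H$ is crossed by at most $O(k)$ others, and thus once a degree-$2$ path has $L(k) = \Theta(k)$ subedges, each subedge needs to absorb at most one crossing along the corresponding segment, and no further benefit comes from additional subdivisions. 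After exhaustively applying this rule, the kernel $G'$ has size bounded by a function of $k$. Since $\text{gt}(G') \leq k + 1$ (draw $F$ planarly and put each non-forest edge into its own layer), I can solve $G'$ by enumerating all $f(k)$ order types of $G'$ and, for each order type, checking every layer assignment.

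The technical crux is the reverse direction $\text{gt}(G') \leq \text{gt}(G)$ of the reduction rule: starting from an optimal drawing of $G$, we must construct a drawing of $G'$ using no more layers. The idea is to straighten each long polyline between its branching endpoints into a single straight segment and redistribute the crossings that segment now accumulates among the $L(k)$ subedges. Ingredient~(ii) guarantees that $L(k)$ subedges suffice to separate these crossings so that each subedge has at most one forbidden layer, which can be avoided from the existing palette. The main effort lies in verifying that the straightening can be carried out simultaneously across all long paths while keeping the layer assignment globally consistent with the layers forced by the short paths and by the non-forest edges; this exchange argument, together with a careful choice of the threshold $L(k)$, is where I expect the main obstacle to lie.
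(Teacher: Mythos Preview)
Your overall strategy---preprocess degree-$\le 1$ vertices, identify the $O(k)$ maximal degree-$2$ paths, and argue that sufficiently long paths can be replaced by short ones---matches the paper's. However, the argument you give for the hard direction $\text{gt}(G') \leq \text{gt}(G)$ does not go through with a fixed threshold $L(k)=\Theta(k)$.

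The issue is in how you apply Ingredient~(ii). That ingredient speaks about a straight-line drawing of the contracted multigraph $H$, but the drawing you actually work with is an optimal drawing of $G$, in which the \emph{short} paths are arbitrary polylines with up to $L(k)$ segments each. When you straighten a long path to a single segment, that segment may cross every individual edge of every short polyline---up to $O(k)\cdot L(k)$ crossings in total, not $O(k)$. To isolate one crossing per subedge you would then need $L(k)\geq c\cdot k\cdot L(k)$, which is impossible; no fixed threshold resolves this circularity.

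The paper sidesteps this in two ways. First, the hard direction is argued in the \emph{opposite} orientation: rather than starting from a drawing of $G$ and shortening, the paper starts from a drawing of the kernel $G_j$ and \emph{reinserts} each deleted long path as a subdivided straight segment. Now the number of crossings such a segment can incur is at most $|E(G_j)|$ plus one per already-reinserted path (those are themselves straight), a quantity that does not depend on the length of the path being inserted. Second, the threshold is not fixed: the paths are sorted by length, and one keeps adding paths to the kernel until the next path $P_{j+1}$ satisfies $|E(P_{j+1})|>2\bigl(|E(G_j)|+x\bigr)$; only the paths from that point on are discarded. This recursive bound is why the kernel size comes out exponential ($10k\cdot 81^k$) rather than polynomial in $k$. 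Your plan would need both modifications to be made rigorous.
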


The proof of \cref{thm:fptfen} also relies on kernelization, but the approach is entirely different from the one used for \cref{thm:fptvcn}. In particular, after some simple preprocessing steps, we show that either the instance already has bounded size or one can identify a path $P$ of degree-$2$ vertices such that deleting $P$ preserves the geometric thickness of the input graph. 

While the vertex cover and feedback edge numbers have found successful applications in graph drawing~\cite{BannisterCE18,BhoreGMN20,BalkoCG00V022,BrandGRS23,BinucciGLLMNS24} as well as numerous other settings~\cite{FellowsLMRS08,FominLMT18,BalabanGR24}, they are still highly ``restrictive'' in the sense that they only attain low values on graphs which exhibit rather simple structural properties. 
In light of the aforementioned obstacles standing in the way of developing efficient algorithms that rely on less restrictive structural parameters such as treewidth, the second half of our article takes a different approach. There, we investigate a third perspective to identify meaningful tractable fragments for computing geometric thickness: instead of targeting well-structured graphs, we consider instances where a large part of the solution is already pre-determined (i.e., provided by a user or as the output of some preceding process).
This is formally captured through the setting of \emph{solution extension} problems, which was pioneered in the seminal paper on extending planar drawings~\cite{AngeliniBFJKPR15} and has since then led to a new perspective for studying the complexity of classical graph drawing problems; in this setting, fixed-parameter algorithms were obtained for extending, e.g., \emph{1-planar}~\cite{EibenGHKN20,EibenGHKN20b}, \emph{crossing-optimal}~\cite{GanianHKPV21}, \emph{planar orthogonal}~\cite{BhoreGKMN23} and \emph{linear} \cite{depian2024parameterized} drawings.

For \GTEfull (\GTE\ in short), we consider the input to also contain an edge-partitioned drawing of a subgraph as the partial solution.
As our first (and also technically simplest) result in the extension setting, we provide a linear-time algorithm that can add a constant number of missing edges while preserving a bound on the geometric thickness of the drawing:

\begin{restatable}{theorem}{fptextend}
\label{thm:fptextend}
\GTE when only $k$ edges are missing from the provided partial drawing is fixed-parameter tractable when parameterized by $k$.
\end{restatable}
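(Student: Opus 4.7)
Since the partial drawing already fixes the positions of all vertices and of the already-drawn edges, the straight-line segment representing each of the $k$ missing edges is uniquely determined by its two endpoints, so the only decision left is, for every missing edge, into which of the $\ell$ layers it is placed. The plan is to show that only $2^k$ ``types'' of layers are actually relevant, which reduces the search space to a function of $k$ alone.

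First, in polynomial time I would compute, for every missing edge $e_i$ and every layer $\lambda\in\{1,\ldots,\ell\}$, whether the segment of $e_i$ crosses some edge already present in $\lambda$. This associates to each layer $\lambda$ a \emph{signature} $\sigma(\lambda)\subseteq\{e_1,\ldots,e_k\}$, namely the set of missing edges that are forbidden from being added to $\lambda$, and partitions the $\ell$ layers into at most $2^k$ equivalence classes $C_1,\ldots,C_m$ of pairwise interchangeable layers. In parallel I would build the \emph{self-conflict} graph $H$ on the $k$ missing edges, in which two missing edges are adjacent iff their straight-line segments cross. Extending the drawing to thickness $\ell$ is then equivalent to finding an assignment $\pi$ of missing edges to equivalence classes such that (i) $e_i\notin\sigma(\lambda)$ for every $\lambda$ in the class $\pi(e_i)$, and (ii) for every class $C_j$, the subgraph $H[\pi^{-1}(C_j)]$ admits a proper coloring with at most $|C_j|$ colors, so that the missing edges mapped to $C_j$ can actually be spread over its layers without creating fresh in-layer crossings. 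I would enumerate all at most $(2^k)^k=2^{k^2}$ candidate assignments $\pi$ and, for each one, verify (i) trivially and (ii) by computing the chromatic number of a graph on at most $k$ vertices in $2^{\BigO{k}}$ time (or noting that (ii) is automatic whenever $|C_j|\geq k$). Any feasible $\pi$ is then turned into an actual extension by materializing a valid coloring inside each equivalence class.

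The part I expect to be the main obstacle is the mutual interaction between the missing edges themselves: a naive per-edge greedy choice of a ``compatible'' layer can fail, because two missing edges that cross each other must end up in different layers, and more subtly several missing edges routed into the same equivalence class still have to fit on its layers simultaneously. The signature-based bucketing combined with the chromatic-number test inside each bucket addresses this by trading the dependence on the possibly unbounded number of layers $\ell$ for a dependence only on the at most $2^k$ relevant layer types, yielding an overall running time of $2^{\BigO{k^2}}\cdot\mathrm{poly}(|V|+|E|+\ell)$ and therefore fixed-parameter tractability in $k$.
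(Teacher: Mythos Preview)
Your proposal is correct and gives a valid \FPT\ algorithm, but it follows a genuinely different route from the paper's proof. The paper's argument is a simpler branching/greedy hybrid: for each missing edge $e$ it computes the set $C(e)$ of layers in which $e$ would not cross an already-drawn edge, then separates the missing edges into those with $\lvert C(e)\rvert<k$ (``constrained'') and those with $\lvert C(e)\rvert\ge k$ (``free''). It branches over the at most $k^k$ ways to color the constrained edges and, in each successful branch, inserts the free edges greedily one by one---which always works because a free edge has at least $k$ compatible layers while at most $k-1$ other missing edges have already been colored, so by pigeonhole some compatible layer avoids all of them. This yields an $\BigO{k^k}\cdot \mathrm{poly}$ algorithm.

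Your signature-class approach instead compresses the dependence on $\ell$ by bucketing layers according to which missing edges they forbid, then enumerates the at most $2^{k^2}$ assignments of missing edges to buckets and uses a chromatic-number check inside each bucket to handle the mutual crossings. This is a clean structural reduction and entirely sound; it buys a perhaps more transparent reformulation of the problem as a purely combinatorial question on $k$ objects and $2^k$ labels, at the price of a somewhat worse dependence on $k$ ($2^{\BigO{k^2}}$ versus the paper's $k^k$). The paper's trick of isolating ``free'' edges that never need branching is what gives it the tighter bound.
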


A major restriction in the above setting is that the partial drawing must already contain all of the vertices of the graph---i.e., one is not allowed to add new vertices. Indeed, typically one allows the addition of vertices and edges in a drawing extension problem, and a common parameter is simply the number of elements missing from the drawing. Surprisingly, we show that a result analogous to the previously mentioned \FPT-algorithms cannot be obtained for geometric thickness: while the problem is polynomial-time solvable when the task is to add a constant number of missing edges and vertices (which we show by providing and analyzing a formulation of the problem in the \emph{Existential Theory of Reals}~\cite{Schaefer09}), we rule out fixed-parameter tractability via a non-trivial reduction.

\begin{restatable}{theorem}{xpwextend}
\label{thm:xpwextend}
\GTE when only $k$ edges and vertices are missing from the provided drawing is \XP-tractable and \W{1}-hard when parameterized by $k$.
\end{restatable}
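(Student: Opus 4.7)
The plan is to combine a brute-force enumeration over layer assignments of the missing edges with a polynomial-time algorithm for sentences in the \ETR with a bounded number of variables. Since at most $k$ edges are missing and we may assume $\ell \le |E|$ (otherwise we can drop unused layers), there are at most $|E|^k$ ways to assign each missing edge to one of the $\ell$ layers. For each such assignment, the remaining task is a purely geometric feasibility question: can the (at most) $k$ missing vertices be placed in $\mathbb{R}^2$ so that every missing edge, drawn as a straight-line segment, is crossing-free with every other edge sharing its layer? This is naturally expressible as an \ETR sentence with $\BigO{k}$ real variables (the coordinates of the new vertices) and $\BigO{|E|^2}$ polynomial constraints of constant degree, since each pair of same-layer edges contributes a non-crossing condition encoded via sign conditions on $2 \times 2$ determinants. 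By classical algorithms for deciding \ETR sentences with a constant number of variables (e.g., Renegar's), each branch is resolved in time polynomial in the input size, yielding the desired \XP bound.

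\textbf{Lower bound (\W{1}-hardness).} I would reduce from \textsc{Multicolored Clique} parameterized by the number $k$ of colors. Given an instance $G = (V_1 \cupdot \ldots \cupdot V_k, E)$, we construct a \GTE instance in which $k$ vertices $v_1, \ldots, v_k$ and $\binom{k}{2}$ edges $v_iv_j$ are missing, so the parameter lies in $\BigO{k^2}$---still a function of $k$, hence sufficient for \W{1}-hardness. A carefully pre-drawn ``skeleton'' serves two purposes. First, for each color class $V_i$, a choice gadget whose geometric thickness already saturates the target value $\ell$ restricts the admissible positions of $v_i$ to a discrete set $P_i \subseteq \mathbb{R}^2$ containing one point $p_u^i$ per vertex $u \in V_i$; any other placement would push some pre-drawn subdrawing above thickness $\ell$. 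Second, the geometry is arranged so that the $\binom{k}{2}$ missing segments $\overline{v_i v_j}$ can jointly be distributed among the $\ell$ layers without introducing new crossings if and only if the selected positions $\{p_{u_i}^i\}_i$ correspond to a clique of $G$, which typically requires routing each $\overline{v_iv_j}$ through a designated ``corridor'' of the skeleton that is obstructed precisely when $u_iu_j \notin E$.

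\textbf{Main obstacle.} The principal difficulty will lie in designing the choice gadget. Unlike in purely combinatorial reductions, we must simultaneously enforce a discrete combinatorial choice via a rigid geometric configuration and ensure that the global interaction between different gadgets faithfully encodes graph adjacency. Since coordinates are unrestricted real numbers, the gadget must be robust under continuous perturbation and must exclude ``unintended'' placements that arise from the two-dimensional continuum of freedom around each valid position; this is typically achieved by densely pre-drawn auxiliary structures that already use all $\ell$ layers in all but a thin set of ``slots''. Balancing this rigidity against the need to still permit the $\binom{k}{2}$ clique-encoding edges---which themselves must fit within the fixed thickness budget---is where I expect the technical heart of the proof to lie.
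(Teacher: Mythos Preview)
Your proposal is correct and follows essentially the same route as the paper. For \XP, the paper folds the layer assignments into the \ETR\ formula as additional real variables (so $2n'+m'\le 2k$ free variables total) rather than branching over them externally; either variant works. For \W{1}-hardness, the paper also reduces from \textsc{Multicolored Clique} with $k$ missing clique vertices arranged around a regular $k$-gon, choice gadgets that confine each $v_i$ to $|V_i|$ small regions, and pre-drawn ``blocking'' edges that obstruct exactly the non-adjacent pairs. One refinement you will need: as stated, your missing edges are only the $\binom{k}{2}$ clique edges, but a missing vertex with no edges into the pre-drawn part is geometrically unconstrained---the paper therefore attaches each $v_i$ to three pre-drawn \emph{anchor} vertices, and these $3k$ extra missing edges (forced into a fixed layer by surrounding triangles) are what actually pin down the position of $v_i$, bringing the parameter to $4k+\binom{k}{2}$. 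The admissible positions end up being tiny disks rather than exact points, and the ensuing non-zero-width visibility corridors are precisely the complication you anticipate in your final paragraph; the paper resolves it by a global scaling argument that guarantees enough room to place the blocking edges without collateral obstructions.
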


As our final result, we complete our analysis of the extension setting by showing that the problem remains \NP-hard even if the partial drawing is only missing two vertices and their incident edges, i.e., if the vertex deletion distance to the original graph is $2$:

\begin{restatable}{theorem}{npextend}
\label{thm:npextend}
\GTE is \NP-hard even if the drawn subgraph can be obtained from the input graph by deleting only two vertices.
\end{restatable}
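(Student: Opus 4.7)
The plan is to reduce from \emph{list 3-colouring of bipartite graphs}, a classical \NP-complete problem, and to produce a \GTE instance with $\ell = 3$ layers. The choice of $\ell = 3$ is not accidental: for $\ell = 2$, after enumerating the polynomially many combinatorial cells that the two missing vertices $u$ and $v$ can occupy in the line arrangement through the other points, choosing layers for their incident edges reduces to a 2-SAT-like problem over binary variables with ``must-differ'' clauses arising from $u$-$v$ edge crossings, so hardness cannot arise before $\ell = 3$.

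The construction uses $u$ and $v$ as two \emph{fans}. Given $H = (X \cup Y, F)$ with lists $L(\cdot) \subseteq \{1,2,3\}$, for every $x \in X$ I introduce a pre-drawn \emph{socket} vertex $p_x$ adjacent to $u$ in $G$, and similarly a socket $q_y$ for each $y \in Y$ adjacent to $v$. The points $u$, $v$, and the sockets are placed so that the straight segment $up_x$ crosses $vq_y$ exactly when $xy \in F$; since any two $u$-incident edges automatically avoid one another (they share the endpoint $u$), the only cross-fan crossings are the intended bipartite ones. To encode $L(x)$, for each forbidden colour $c \notin L(x)$ I add a short pre-drawn \emph{barrier} edge on layer $c$ lying in a narrow wedge around $up_x$; this forces $up_x$ to avoid layer $c$ in any valid extension, since otherwise it would create a monochromatic crossing with the barrier. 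Analogous barriers enforce $L(y)$. To pin each of $u$ and $v$ to its intended combinatorial cell I attach a small pre-drawn cage of vertices whose straight-line edges to $u$ (resp.~$v$) are realisable only from the designated open cell of the arrangement. Correctness is then a direct check: a valid extension picks, for every $up_x$ and every $vq_y$, colours $c(x) \in L(x)$ and $c(y) \in L(y)$ subject to $c(x) \neq c(y)$ whenever $xy \in F$, i.e.\ exactly a proper list 3-colouring of $H$.

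The main technical obstacle is the geometric realisation: simultaneously inducing the prescribed bipartite crossing pattern between the two fans, fitting all per-socket wedge barriers without causing spurious same-layer crossings, and embedding the pinning cages. I plan to handle this by \emph{scale separation}, placing the $X$-sockets along a line extremely close to $u$ and the $Y$-sockets along a line extremely close to $v$, so that the wedges around distinct sockets are pairwise disjoint and each comfortably accommodates its barriers, while the cages of $u$ and $v$ live at a still smaller scale and do not interact with anything else. Since not every bipartite $H$ is directly realisable as the crossing graph of two straight-line fans, I would first reduce list 3-colouring to the order-realisable (essentially two-page-book-layout) subfragment, which remains \NP-hard, and, where needed, insert colour-equality splitter gadgets implemented purely through barriers that replace a problematic vertex by several copies forced to share a colour, enabling the augmented instance to be realised in the two-fan sense.
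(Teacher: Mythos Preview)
Your high-level scheme---two missing ``fan'' vertices, pre-drawn barrier edges enforcing per-edge colour lists, and fan--fan crossings encoding $\neq$ constraints---is precisely the template the paper uses. The instantiation, however, is quite different. The paper reduces from \textsc{3-SAT} and uses $\ell=2n+1$ layers (one colour per literal plus a control colour). The variable sockets sit on the bottom side of a square and the clause sockets on the left side, while the two fan centres are pinned (via the same choice-gadget machinery as in the \W{1}-hardness proof) to the top and right sides; thus \emph{every} edge of one fan crosses \emph{every} edge of the other. The list for a variable edge $tx$ is $\{x,\neg x\}$ and for a clause edge $vc$ with $c=l_1\lor l_2\lor l_3$ it is $\{\neg l_1,\neg l_2,\neg l_3\}$; with the all-cross geometry the only constraint is that the two fans use disjoint colour sets, which is exactly satisfiability. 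No particular crossing pattern needs to be engineered, and the geometry becomes routine.

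The gap in your proposal stems from insisting on $\ell=3$: this forces you to encode the bipartite adjacency of $H$ in the crossing pattern of the two fans rather than in the colour lists. You acknowledge that not every bipartite graph arises as such a crossing pattern, but neither of your repairs is worked out. You do not identify the realisable subclass or argue that list $3$-colouring remains \NP-hard on it. Colour-equality splitters can indeed be built combinatorially (for instance, a $K_{2,3}$ whose three $v$-side vertices carry lists $\{1,2\},\{1,3\},\{2,3\}$ forces the two $u$-side vertices to agree), but each such gadget contributes six further fan--fan crossings that must themselves be realised simultaneously with the rest of the instance, and you give no argument that the augmented instance becomes realisable. Separately, your scale-separation sketch (``sockets along a line extremely close to $u$'') literally makes the $u$-fan edges too short to reach the $v$-fan, so the intended crossings cannot occur; whatever you actually mean here needs to be made precise. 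If you relax the $\ell=3$ goal and let $\ell$ grow with the instance, the paper's all-cross trick eliminates every one of these difficulties; conversely, a correct version of your argument would establish the strictly stronger statement that hardness already holds at $\ell=3$.
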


We remark that \cref{thm:npextend} contrasts parameterized algorithms that exploit the vertex deletion distance to solve the drawing extension problem, e.g., for \emph{IC-planar}~\cite{EibenGHKN20} and \emph{1-planar}~\cite{EibenGHKN20b} drawings. 

\ifshort
    \medskip
    \emph{Due to space constraints, we provide full proofs and details of results marked with \appsymb\ in the full version of this paper \cite{this_paper_arxiv_version}.}
\fi

\setbool{inintro}{false}
\subsection{Preliminaries}
\label{sec:prelims}
We assume familiarity with standard graph terminology~\cite{diestel} and
the basic concepts of the parameterized complexity paradigm, specifically \emph{fixed-parameter tractability}, \XP-\emph{tractability}, \W{1}-\emph{hardness} and \emph{kernelization}~\cite{DowneyF13,CyganFKLMPPS15}. For an integer $p$, we set $[p]=\{1,\dots,p\}$. For a function $f \colon A\to B$ and $X\subseteq A$, let $f(X)$ denote the set $\{f(x) \mid x\in X\}$.

Let $G$ be a simple graph with vertex set $V(G)$ and edge set $E(G)$. We use $\overline{E(G)}$ to denote the complement edge set, i.e., $\overline{E(G)}=\binom{V(G)}{2} \setminus E(G)$, and $N_G(v)$ to denote the  neighborhood of a vertex $v$ in $G$ excluding $v$. A \emph{straight-line drawing} $\Gamma$ of $G$ is a mapping from vertices to distinct points in $\mathbb{R}^2$, where we consider edges to be drawn as straight-line segments connecting their end vertices. 

A \emph{geometric $\ell$-layer drawing} of a graph $G$ is a straight-line drawing $\Gamma$ of $G$ accompanied with a function $\chi$ mapping each edge of $G$ to a \emph{planar layer} (also called \emph{color}) from $[\ell]$ with the  property that no pair of edges assigned to the same planar layer cross each other in $\Gamma$. We equivalently say that ($\Gamma, \chi$) is free of \emph{monochromatic crossings}. We can now define \textsc{Geometric Thickness} formally:

\problembox{Geometric Thickness (GT)}{A graph $G$ and an integer $\ell$.}{Does there exist a geometric $\ell$-layer drawing $(\Gamma,\chi)$ of $G$?}

\textsc{GT} on multi-graphs is known to be $\exists\mathbb{R}$-complete~\cite{ForsterKMPTV24}, and thereby it is in particular decidable on simple graphs. 
We will frequently make use of the following folklore observation:
\begin{observation}\label{observation:general_position}
    Let $G$ be a graph.
    If $G$ admits a \drawing{\ell}, it also admits a \drawing{\ell} in general position, i.e., where no three vertices are collinear and no three edges cross at the same point. 
\end{observation}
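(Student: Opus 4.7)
The plan is a standard perturbation argument. Given a \drawing{\ell} $(\Gamma, \chi)$ of $G$ with vertex positions $p_1, \ldots, p_n \in \mathbb{R}^2$, I would keep the coloring $\chi$ fixed and view the space of candidate vertex placements as $\mathbb{R}^{2n}$. The first step is to observe that the set $U \subseteq \mathbb{R}^{2n}$ of placements inducing a valid \drawing{\ell} under $\chi$ (i.e., with distinct vertex positions and no pair of same-colored edges crossing) is \emph{open}: whether two straight-line segments are disjoint is governed by strict polynomial inequalities in the coordinates of their endpoints, and small perturbations therefore preserve the non-crossing property for every monochromatic edge pair, as well as vertex distinctness. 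By hypothesis, the starting placement lies in $U$, so $U$ is nonempty.

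The second step is to describe the \emph{bad} placements as a finite union $B$ of proper algebraic subvarieties of $\mathbb{R}^{2n}$. For every triple of distinct vertices $\{u,v,w\}$, collinearity is the vanishing of a single $2 \times 2$ determinant in the coordinates of $p_u, p_v, p_w$, which is a nonzero polynomial on $\mathbb{R}^{2n}$ and hence defines a measure-zero subvariety. For every triple of pairwise vertex-disjoint edges, the condition that their supporting segments share a common interior point is likewise expressible as the vanishing of a nonzero polynomial, and thus cuts out a measure-zero subvariety. Since there are only finitely many such triples, their union $B$ has Lebesgue measure zero.

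The conclusion then follows immediately: since $U$ is open and nonempty while $B$ has measure zero, $U \setminus B$ is dense in $U$ and in particular nonempty. Any placement in $U \setminus B$ paired with the unchanged coloring $\chi$ yields a \drawing{\ell} in general position.

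The one subtle point, and the main thing requiring care, is the proper formalization of ``three edges cross at the same point''. Edges sharing an endpoint always have concurrent supporting \emph{lines} at that endpoint, so if interpreted naively, the corresponding variety would be all of $\mathbb{R}^{2n}$ rather than a proper subvariety. The fix is to declare only those triples of edges ``bad'' whose pairwise \emph{relative interiors} meet at a single common point; this automatically restricts attention to triples of pairwise vertex-disjoint edges, for which triple concurrency is a nontrivial polynomial condition and the variety argument goes through cleanly.
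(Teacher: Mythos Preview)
Your perturbation argument is correct and is exactly the standard way such folklore facts are justified. The paper itself does not supply a proof of this observation at all: it is introduced as a ``folklore observation'' and simply used freely thereafter. So there is nothing to compare against; you have in fact filled in what the paper leaves implicit.

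Two minor remarks on presentation. First, the locus where three pairwise vertex-disjoint segments share a common interior point is not literally a subvariety (it is the intersection of the concurrency variety for the supporting lines with the open set where the common point lies in all three relative interiors), but it is contained in a proper subvariety and hence still has measure zero, so your conclusion stands. Second, your handling of the subtlety about edges sharing an endpoint is exactly right: once no three vertices are collinear, any two edges with a common endpoint meet only there, so a triple of edges with a common \emph{interior} crossing must be pairwise vertex-disjoint, and the restriction you impose loses nothing.
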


A geometric $\ell$-layer drawing $(\Gamma_G,\chi_G)$ of $G$ is an \emph{extension} of a geometric $\ell$-layer drawing $(\Gamma_H,\chi_H)$ of a subgraph $H$ of $G$ if $\Gamma_G$ and $\chi_G$ are extensions of $\Gamma_H$ and $\chi_H$, respectively. That is, $(\Gamma_G,\chi_G)$ and $(\Gamma_H,\chi_H)$ assign the same values to the vertices and edges, respectively, of $H$.
With this, we can formalize \GT\ in the extension setting.%

\problembox{Geometric Thickness Extension (GTE)}
{A graph $G$, an integer $\ell$, a geometric $\ell$-layer drawing $(\Gamma_H,\chi_H)$ of some subgraph $H$ of $G$.}
{Does there exist a geometric $\ell$-layer drawing $(\Gamma_G,\chi_G)$ of $G$ which extends $(\Gamma_H,\chi_H)$?}

Observe that \GT\ can be seen as a special case of \GTE\ where $H$ is the empty graph.  We remark that while the problems are formalized as decision problems for complexity-theoretic purposes, every algorithmic result presented within this article is constructive and can be extended via standard techniques to also output a geometric $\ell$-layer drawing as a witness.

\section{Parameterizing by the Vertex Cover Number}
\label{section:gt_by_vc_fpt}

\toappendix{
   \ifshort
   \section{Additional Material for \cref{section:gt_by_vc_fpt}}
   \label{app:section:gt_by_vc_fpt}
   \fi
}

In this section, we show that \probname{Geometric Thickness} parameterized by the vertex cover number is fixed-parameter tractable (\cref{thm:fptvcn}).
On a high level, our proof strategy is as follows. 
We say two vertices are \emph{clones} in a \drawing{\ell} of a graph $G$ if they have the same neighborhood in $G$ and each edge incident to one vertex is colored the same as the respective edge incident to the other vertex.
Given a vertex cover of the input graph and a drawing of it, we partition the plane into a small number of \emph{cells}.
We first formulate a characterization that gives necessary and sufficient conditions for when a vertex in the drawing can be ``cloned'' within the cell it is positioned in without changing the outcome of the instance.

Towards subsequently obtaining a kernel,
we partition the vertices outside the vertex cover into equivalence classes
based on their neighborhood w.r.t.\ the vertex cover.
If an equivalence class is sufficiently large,
we delete an arbitrary vertex $x$ of the class.
Clearly, deleting a vertex preserves positive instances.
Conversely, in a drawing of the instance without $x$, using the pigeon-hole principle,
we will always be able to find at least two mutual clones $c_1, c_2$ of $x$'s class that share a cell. Using our characterization, this will imply that $c_1$ can be cloned within that cell, i.e., $x$ can always be re-inserted into the drawing, positioned inside the cell, and its incident edges can be colored to match those incident to $c_1$.
We thereby obtain a kernel with size dependent on the vertex cover number and the number of colors,  and later show that the latter can be dropped from the parameterization.
Note that, similar to the work of Bannister et al.\ \cite{bannister2013parameterized}, as the graphs of bounded vertex cover number are well-quasi-ordered under induced subgraphs \cite{nevsetvril2012sparsity},
the weaker notion of nonuniform  fixed-parameter tractability \cite{CyganFKLMPPS15} for this problem is already implied, although this neither yields an actual \FPT\ algorithm nor provides concrete runtime bounds.

\paragraph{A Characterization of Cloneability within a Cell.}
To derive the characterization, we start by defining two simple geometric constructions (see \cref{figure:H_and_T}).

\begin{definition}\label{definition:H_and_T}
Given a graph $G$, a drawing $\Gamma$ of $G$ in general position, and three distinct vertices $v, a, b \in V(G)$, we use $H_\Gamma(v, a, b)$ to denote the interior of the half plane defined by the line through $\Gamma(a), \Gamma(b)$ that does not contain $\Gamma(v)$.
Additionally, we use $T_\Gamma(v, a, b)$ to denote the interior of the ``tie''-shape defined by the center point $\Gamma(v)$ and directions $\vv{\Gamma(v)\Gamma(a)}$ and 
$\vv{\Gamma(v)\Gamma(b)}$. %
\end{definition}

\begin{figure}[t]
    \centering
    \includegraphics[page=22]{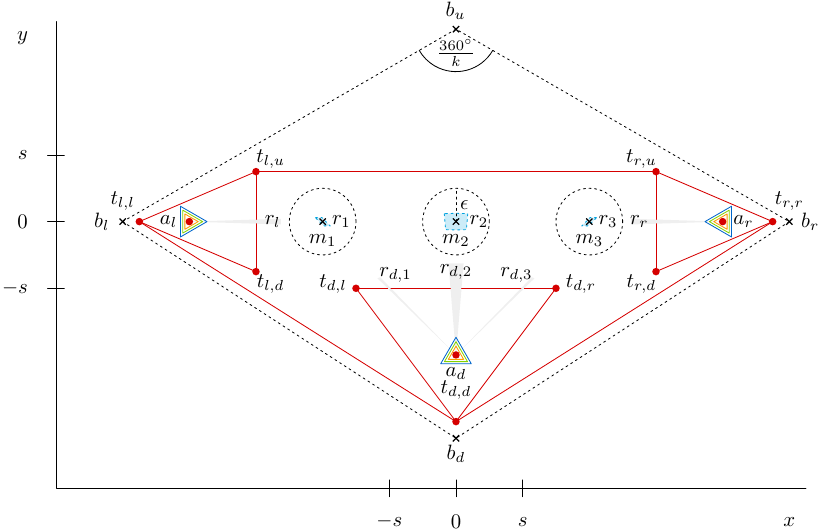}
    
    \vspace{-0.2cm}\caption{Illustration of \cref{definition:H_and_T}.}
    \label{figure:H_and_T}
\end{figure}

\looseness=-1
The characterization is based upon two observations regarding the possible positions where a vertex (or a clone thereof) can be placed while avoiding monochromatic crossings,
each time considering different sets of edges. The two observations follow directly from the definitions; see also Figure~\ref{figure:drawing_maps_clones_in_set}.

\begin{observation}\label{obs:drawing_maps_clones_in_set}
    Let $G$ be a graph, $(\Gamma, \chi)$ be a  \drawing{\ell} of $G$ in general position, and $a,b \in N_G(v)$ be distinct such that $\chi(va) = \chi(vb)$.
    Consider the tuple $(\Gamma', \chi')$ obtained from $(\Gamma, \chi)$ by cloning $v$ into a new vertex $w$ placed in general position. Then, there is no monochromatic crossing between two edges $e_1, e_2$ with $e_1 \in \Set{va, vb}$ and $e_2 \in \Set{wa, wb}$ in $(\Gamma', \chi')$ if and only if $\Gamma'(w) \in T_\Gamma(v, a, b) \cup H_\Gamma(v, a, b)$. 
\end{observation}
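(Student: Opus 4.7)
The statement reduces to a purely geometric claim. Since $w$ is a clone of $v$, the edges $wa, wb$ inherit the colors of $va, vb$, and by hypothesis $\chi(va) = \chi(vb)$. Hence all four edges are monochromatic, and among the six pairs in $\{va, vb\} \times \{wa, wb\}$ the only ones that can cross in their interiors are $\{va, wb\}$ and $\{vb, wa\}$ (the others share an endpoint). Thus I must show: $\Gamma'(w) \in T_\Gamma(v,a,b) \cup H_\Gamma(v,a,b)$ if and only if neither $va$ properly crosses $wb$ nor $vb$ properly crosses $wa$.

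Let $L$ be the line through $\Gamma(a), \Gamma(b)$ and let $\ell_a, \ell_b$ be the lines through $\Gamma(v), \Gamma(a)$ and $\Gamma(v), \Gamma(b)$ respectively. By general position, $\Gamma'(w)$ lies in an open cell of the arrangement of these three lines. Observe that $T_\Gamma(v,a,b) \cup H_\Gamma(v,a,b)$ is precisely the union of (i) the open half-plane $H$ on the far side of $L$ from $v$, (ii) the open cone at $v$ bounded by $\ell_a, \ell_b$ that contains $\Gamma(a), \Gamma(b)$, and (iii) the opposite open cone at $v$. The complement (within the plane minus the three lines) consists of the two "side" wedges at $v$, each lying on $v$'s side of $L$ and bounded by the forward ray of one of $\ell_a, \ell_b$ and the backward ray of the other.

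For the forward direction I do a case analysis. If $\Gamma'(w) \in H$, then $wa, wb$ lie in the closed half-plane opposite $v$ and can meet $va, vb$ only on $L$, hence only at the shared endpoints $a, b$. If $\Gamma'(w)$ lies in the cone containing $a, b$ and on $v$'s side of $L$, then $\Gamma'(w)$ is inside the (convex) triangle $\Gamma(v)\Gamma(a)\Gamma(b)$, so $wa, wb$ stay inside this triangle and cannot properly cross its sides $va, vb$. If $\Gamma'(w)$ lies in the opposite cone at $v$, then $v$ separates $w$ from $a$ along $\ell_b$: segment $wa$ meets $\ell_b$ only on the ray from $v$ opposite to $b$, so it cannot cross segment $vb$; symmetrically $wb$ does not cross $va$. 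For the reverse direction, suppose $\Gamma'(w)$ lies in a side wedge; without loss of generality $\Gamma'(w)$ is on the side of $\ell_a$ opposite $b$ but on the side of $\ell_b$ containing $a$. Then $w$ and $b$ lie on opposite sides of $\ell_a$, so segment $wb$ crosses $\ell_a$; using the fact that $\Gamma'(w)$ is on $v$'s side of $L$ and outside the cone containing $a,b$, this crossing point lies strictly between $v$ and $a$, producing a proper monochromatic crossing of $wb$ with $va$.

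The main obstacle will be the reverse direction: cleanly arguing that the crossing point lies \emph{on the segment} $va$ (not on an extension beyond $v$ or beyond $a$) needs to use both that $\Gamma'(w)$ is strictly between $L$ and the parallel line through $v$ in the "angular" sense and that it is outside the relevant cone; this is most transparent via an affine-coordinate computation where $v$ is placed at the origin and $a, b$ are taken as a basis, which I expect to execute in the final write-up.
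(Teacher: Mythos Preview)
Your case analysis is correct and is precisely the geometric content the paper has in mind; the paper itself offers no proof beyond stating that the observation ``follows directly from the definitions'' together with a figure. The affine-coordinate setup you propose (taking $v$ as origin and $\Gamma(a)-\Gamma(v),\ \Gamma(b)-\Gamma(v)$ as a basis) cleanly handles the reverse direction, and the only imprecision in your sketch---describing the side wedges as ``lying on $v$'s side of $L$'' rather than as \emph{intersected} with that open half-plane---is harmless, since you explicitly invoke the condition $\Gamma'(w)\in$ ($v$'s side of $L$) at the point where it is needed.
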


The next observation implies there is a small disk of ``safe placements'' around each vertex $v$ in a drawing, where a clone of $v$ can be safely inserted, ignoring crossings with $v$'s edges:

\begin{observation}\label{lemma:global_crossing_free_region}
Let $G$ be a graph, $(\Gamma, \chi)$ be a  \drawing{\ell} of $G$ in general position, and $v \in V(G)$.
Then, the set $B \subseteq \mathbb{R}^2$ where $v$ can be moved to in $(\Gamma, \chi)$ without introducing monochromatic crossings, is open.
\end{observation}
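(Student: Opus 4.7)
The plan is to establish openness of $B$ by showing that its complement is closed. A point $p \in \mathbb{R}^2$ fails to lie in $B$ for one of two reasons: either $p$ coincides with the image $\Gamma(w)$ of some vertex $w \neq v$, or relocating $v$ to $p$ creates a monochromatic crossing between some pair of edges. The forbidden vertex positions form a finite (hence closed) set, so the task reduces to proving that for each pair of same-colored edges the set of placements of $v$ that introduces a crossing between them is closed. Since the number of edge pairs is finite, the union of these sets would then be closed, and taking the complement would yield openness of $B$.

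To establish closedness of the ``bad'' set for a fixed same-colored pair $(e_1, e_2)$, I would perform a brief case analysis. If neither edge is incident to $v$, the pair is unaffected by the relocation and contributes the empty set by validity of $(\Gamma,\chi)$. If exactly one edge, say $e_1 = v u_1$, is incident to $v$ and $e_2 = xy$ shares no graph-endpoint with $e_1$, then a crossing corresponds to $[p, \Gamma(u_1)] \cap [\Gamma(x), \Gamma(y)] \neq \emptyset$, which is a closed condition by compactness: a sequence $p_n \to p$ with intersection points $q_n \in [p_n, \Gamma(u_1)] \cap [\Gamma(x), \Gamma(y)]$ admits a convergent subsequence $q_{n_k} \to q$ inside the compact segment $[\Gamma(x), \Gamma(y)]$, and the limit $q$ witnesses an intersection for the limiting configuration. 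The remaining cases, in which $e_1$ and $e_2$ share a vertex-endpoint (or both are incident to $v$), are degenerate: the edges cannot cross transversally at the shared endpoint, so a ``crossing'' can arise only from overlap of the two segments along a common supporting line. This condition pins $p$ to a union of closed rays on the line through the other two graph-endpoints, which is again closed.

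The main obstacle is not depth but bookkeeping: the shared-endpoint subcases each need a small separate treatment since the usual transversal-intersection geometry no longer applies there. Once they have been dispatched, I would conclude by assembling the finitely many closed bad sets with the finite set of forbidden vertex positions, giving a closed complement for $B$ and hence openness of $B$ as claimed.
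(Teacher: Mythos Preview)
Your proposal is correct. The paper does not actually supply a proof of this observation; it merely states that it ``follows directly from the definitions'' and points to the accompanying figure, so your complement-is-closed argument via the finite case analysis on same-colored edge pairs is exactly the kind of direct verification the paper is implicitly invoking.
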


\begin{figure}[t]
    \centering
    \includegraphics[page=23]{figures}
    
    \vspace{-0.2cm}\caption{Illustrations for \obsref{obs:drawing_maps_clones_in_set} (left), \obsref{lemma:global_crossing_free_region} (middle, the gray area is the complement of $B$), and \cref{definition:cells_and_admissible_region_and_clones} (right, 
    the set $S$ is shown in blue, 
    the cells of $\Gamma$ induced by $S$ are shown in gray).}
    
    \label{figure:drawing_maps_clones_in_set}
    \label{figure:global_crossing_free_region}
    \label{figure:cells}
\end{figure}

With the above setup in hand, we can now turn to the notion of cells and admissible regions.
The intuition behind the admissible region of a vertex is that we consider the restrictions on the position of a clone imposed by all pairs of same-colored edges incident to the vertex (for a single pair, this is precisely \obsref{obs:drawing_maps_clones_in_set}) after disregarding the ``half plane portion'', since we are only interested in potential positions ``close'' to the vertex (i.e., in the same cell).
An illustration for the notion of a cell is  provided in \cref{figure:cells}.

\begin{definition}\label{definition:cells_and_admissible_region_and_clones}
    Let $G$ be a graph, $S$ a vertex cover of $G$, and $\Gamma$ a straight-line drawing of $G$ in general position. For a vertex $v \in V(G) \setminus S$,
    we define the \emph{cell of $v$ in $\Gamma$ with respect to $S$} as the complement of the union of half-planes $H_\Gamma(v,\cdot,\cdot)$ induced by $\Gamma(S)$, i.e.: 
        \begin{equation*}
            C_{\Gamma, S}(v) \coloneqq \mathbb{R}^2 \setminus \bigcup_{a,b \in S, a \neq b} H_\Gamma(v, a, b).
    \end{equation*}
    Furthermore, we call the set obtained by intersecting tie-shapes defined by same-colored edges
    \begin{equation*}
        A_{\Gamma, \chi}(v) \coloneqq \bigcap_{\substack{a, b \in N_G(v), a \neq b, \\ \chi(va) = \chi(vb)}} T_\Gamma(v, a, b)
    \end{equation*}
    the \emph{admissible region} of $v$ in $(\Gamma,\chi)$.
\end{definition}

We can now provide a characterization of when vertices may be cloned inside their cell:

\begin{restatable}{lemma}{localclonablecharacterisation}\label{lemma:local_clonable_characterisation}
    Let $G$ be a graph, $(\Gamma, \chi)$ be a \drawing{\ell} of $G$ in general position, $S$ a vertex cover of $G$, and $v \in V(G) \setminus S$.
    Then, cloning $v$ and placing its clone into the cell $C_{\Gamma, S}(v)$ can yield a \drawing{\ell} if and only if $A_{\Gamma,\chi}(v) \neq \emptyset$.
\end{restatable}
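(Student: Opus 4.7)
The plan is to establish the biconditional by tracking exactly which half-planes and tie-shapes are forced to contain the clone's position, and then arguing that the admissible region captures precisely the residual geometric freedom. The forward direction is the easier one: assuming the cloned drawing with $w \in C_{\Gamma,S}(v)$ is valid, I would take an arbitrary same-color pair $a, b \in N_G(v)$ (so that $\chi(wa) = \chi(va) = \chi(vb) = \chi(wb)$ after cloning) and invoke \obsref{obs:drawing_maps_clones_in_set} to deduce $w \in T_\Gamma(v, a, b) \cup H_\Gamma(v, a, b)$. Since $S$ is a vertex cover and $v \notin S$, both $a$ and $b$ lie in $S$, so $H_\Gamma(v, a, b)$ is precisely one of the half-planes removed in the definition of $C_{\Gamma, S}(v)$; hence $w \in T_\Gamma(v, a, b)$, and intersecting over all same-color pairs yields $w \in A_{\Gamma, \chi}(v)$.

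For the converse, I would start from an arbitrary point $p \in A_{\Gamma, \chi}(v)$ and exploit the geometric structure of a tie-shape: each $T_\Gamma(v, a, b)$ is a union of two open cones with apex $v$, so the open segment from $v$ to $p$ lies entirely inside every tie-shape containing $p$, and therefore inside $A_{\Gamma, \chi}(v)$. Consequently, $A_{\Gamma, \chi}(v)$ accumulates at $v$ along the direction $p - v$. Since $v$ lies in the interior of $C_{\Gamma, S}(v)$ by general position, and \obsref{lemma:global_crossing_free_region} supplies an open neighborhood $B$ of $v$ in which $v$ can be displaced without introducing monochromatic crossings, I can choose a point $q$ on this segment close enough to $v$ so that $q \in C_{\Gamma, S}(v) \cap B \cap A_{\Gamma, \chi}(v)$, and place the clone $w$ at $q$ with edges colored identically to $v$'s corresponding edges.

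What remains is to verify that the cloned drawing has no monochromatic crossings. Edges $wa$ and $wb$ meet at $w$, and $wa$ meets $va$ at $a$, so these incidences contribute no crossings; for a same-color pair $\{va, vb\}$, monochromatic crossings with $\{wa, wb\}$ are ruled out by \obsref{obs:drawing_maps_clones_in_set} because $q \in T_\Gamma(v, a, b)$; and for any new edge $wa$ against an edge $e$ of the same color incident to neither $v$ nor $w$, I would rely on $q \in B$, which means that moving $v$ to $q$ keeps the drawing crossing-free, so the segment from $q$ to $a$, geometrically identical to the drawing of $wa$, does not cross $e$ monochromatically. The main obstacle I anticipate is precisely this last translation: \obsref{lemma:global_crossing_free_region} is phrased for displacing $v$, whereas in the cloned drawing $v$ and $w$ coexist, so I must carefully argue that all interactions between $v$'s original edges and the clone's new edges at $w$ are already captured by the tie-shape conditions, leaving only the "external" crossings of $w$'s edges to be controlled by the move-$v$ guarantee.
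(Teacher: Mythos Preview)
Your proposal is correct and essentially matches the paper's proof: both directions invoke \obsref{obs:drawing_maps_clones_in_set} in the same way, and for the converse you use the cone structure of the tie-shapes to find a point of $A_{\Gamma,\chi}(v)$ arbitrarily close to $\Gamma(v)$, exactly as the paper does when it notes that $A_{\Gamma,\chi}(v)$ is itself a tie-shape centered at $\Gamma(v)$ and hence meets any small open disk $D_1 \cap D_2 \subseteq C_{\Gamma,S}(v) \cap B$. The only detail the paper makes explicit that you leave implicit is that $q$ can be chosen so that the augmented vertex set is again in general position.
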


\ifshort
\notappendixproof{lemma:local_clonable_characterisation}{localclonablecharacterisation}{
\begin{proof}[Sketch]
    $(\Rightarrow) \colon$
    Let $(\Gamma', \chi')$ be a \drawing{\ell} in general position obtained from $(\Gamma, \chi)$ by cloning $v$ into the cell $C_{\Gamma, S}(v)$; call the clone $w$.
    Applying \obsref{obs:drawing_maps_clones_in_set} for all distinct $a, b \in N_G(v)$ where $\chi(va) = \chi(vb)$ yields
    \begin{equation*}
        \Gamma'(w) \in \bigcap_{\substack{a, b \in N_G(v), a \neq b, \\ \chi(va) = \chi(vb)}} T_{\Gamma}(v, a, b) \cup H_{\Gamma}(v, a, b).
    \end{equation*}
    Observe that $H_{\Gamma}(v, a, b) \cap C_{\Gamma, S}(v) = \emptyset$ for all distinct $a, b \in N_G(v)$. Hence, using $\Gamma'(w) \in C_{\Gamma, S}(v)$, we obtain
    \begin{equation*}
        \Gamma'(w) \in \bigcap_{\substack{a,b \in N_G(v), a \neq b, \\ \chi(va) = \chi(vb)}} T_\Gamma(v, a, b) = A_{\Gamma, \chi}(v).
    \end{equation*}
    Therefore, $A_{\Gamma, \chi}(v) \neq \emptyset$.

    $(\Leftarrow) \colon$
    We find a placement for the clone inside the cell of $v$ that is ``very close'' to $v$ and in $A_{\Gamma, \chi}(v)$. Correctness then follows from 
    \obsref{obs:drawing_maps_clones_in_set} and \obsref{lemma:global_crossing_free_region}. For the full proof, see the full version of this paper \cite{this_paper_arxiv_version}.
\qed\end{proof}
} %
\fi

\appendixproof{lemma:local_clonable_characterisation}{localclonablecharacterisation}{
\begin{proof}
    $(\Rightarrow) \colon$
    Let $(\Gamma', \chi')$ be a \drawing{\ell} in general position obtained from $(\Gamma, \chi)$ by cloning $v$ into the cell $C_{\Gamma, S}(v)$; call the clone $w$.
    Applying \obsref{obs:drawing_maps_clones_in_set} for all distinct $a, b \in N_G(v)$ where $\chi(va) = \chi(vb)$ yields
    \begin{equation*}
        \Gamma'(w) \in \bigcap_{\substack{a, b \in N_G(v), a \neq b, \\ \chi(va) = \chi(vb)}} T_{\Gamma}(v, a, b) \cup H_{\Gamma}(v, a, b).
    \end{equation*}
    Observe that $H_{\Gamma}(v, a, b) \cap C_{\Gamma, S}(v) = \emptyset$ for all distinct $a, b \in N_G(v)$. Hence, using $\Gamma'(w) \in C_{\Gamma, S}(v)$,
    \begin{equation*}
        \Gamma'(w) \in \bigcap_{\substack{a, b \in N_G(v), a \neq b, \\ \chi(va) = \chi(vb)}} T_\Gamma(v, a, b) = A_{\Gamma, \chi}(v).
    \end{equation*}
    Therefore, $A_{\Gamma, \chi}(v) \neq \emptyset$.

    $(\Leftarrow) \colon$
    Assume $A_{\Gamma,\chi}(v) \neq \emptyset$.
    We find a suitable position to insert a clone of $v$ as follows:
First, we find an open disk $D_1 \subseteq C_{\Gamma, S}(v)$ with center $\Gamma(v)$ and positive radius.
Next, we find the region $B$ (see \obsref{lemma:global_crossing_free_region}) where $v$ can be moved without introducing monochromatic crossings, and use $\Gamma(v) \in B$ together with the fact that $B$ is open (\obsref{lemma:global_crossing_free_region}) to find an open disk $D_2 \subseteq B$ with positive radius and center $\Gamma(v) \in B$.
The set $D \coloneqq D_1 \cap D_2$ is again an open disk with positive radius and center $\Gamma(v)$.
Observe that $A_{\Gamma, \chi}(v)$ forms a ``tie''-shape with center $\Gamma(v)$ as it is a non-empty intersection of ``tie''-shapes of center $\Gamma(v)$.
Hence, there is $x \in D \cap A_{\Gamma, \chi}(v)$ and clearly we can also assume that $\Set{x} \cup \Gamma(V(G))$ is again in general position.

To construct the required drawing, we clone $v$ and insert the clone, call it $w$, at position~$x$.
Because $x \in D_1$, $w$ is inserted into the required cell $C_{\Gamma, S}(v)$.
Because $x \in D_2$ and \obsref{lemma:global_crossing_free_region}, the edges incident to $w$ do not create monochromatic crossings with edges not incident to $v$ in the original drawing.
An edge that is the only one of its color incident to $w$ cannot create a crossing with any edge incident to $v$.
For colors that have two (or more) edges incident to $w$ we cannot get monochromatic crossings with the edges incident to $v$ due to $x \in A_{\Gamma, \chi}(v)$ and \obsref{obs:drawing_maps_clones_in_set}.
Clearly, the edges incident to $w$ in the new drawing also do not create monochromatic crossings with themselves.
Hence, in total, the new drawing satisfies the conditions of this lemma.
\qed\end{proof}
} %

\paragraph{Deriving the Kernel.}

Using Theorem 28.1.1 of the Handbook of Discrete and Computational Geometry \cite{GeometryHandbook} to bound the number of distinct cells in a drawing, we have all prerequisites to derive a kernel parameterized by the vertex cover number plus the number of colors, $\ell$:

\begin{restatable}{lemma}{geomthicknessfptbyvcplusl}\label{lemma:geom_thickness_fpt_by_vc_plus_l}
\probname{Geometric Thickness} parameterized by $\ell + k$, where $k$ denotes the vertex cover number of the input graph and $\ell$ the number of allowed colors in the drawing, admits a problem kernelization mapping each instance $I = (G, \ell)$ to an equivalent instance $I' = (G^*, \ell)$ where $\Size{V(G^*)} = \ell^{\BigO{k}}$.
\end{restatable}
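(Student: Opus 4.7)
The plan is to obtain the kernel via a single reduction rule that shrinks large \emph{twin classes} outside a vertex cover, with correctness driven by \cref{lemma:local_clonable_characterisation}. First I would compute a vertex cover $S$ of $G$ of size at most $2k$ via the standard factor-$2$ approximation and partition the independent set $V(G)\setminus S$ into twin classes according to $N_G(\cdot)\cap S$; since each class is identified by a subset of $S$, there are at most $2^{|S|}\le 2^{2k}$ of them. The reduction rule will then be: if some twin class $X$ contains more than a threshold $T$ of vertices, delete an arbitrary $x\in X$ and restart.

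To fix $T$, I would use two counting arguments that together bound how ``distinguishable'' vertices of a single twin class can be in any candidate drawing. First, Theorem~28.1.1 of \cite{GeometryHandbook} bounds the number of faces of the arrangement of the $\binom{|S|}{2}$ lines through pairs of points in $\Gamma(S)$ by $\BigO{k^4}$, and in particular the number of distinct cells $C_{\Gamma,S}(\cdot)$ that can be realized by vertices of $V(G)\setminus S$. Second, every vertex of a fixed twin class has the same at-most-$k$ neighborhood in $S$, so there are at most $\ell^k$ distinct color patterns for its incident edges. Setting $T \coloneqq \ell^k\cdot c\cdot k^4$ for a suitable constant $c$ then strictly exceeds the number of (cell, color-pattern) combinations that vertices of a single twin class can occupy in any \drawing{\ell} in general position.

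For correctness, the forward direction is immediate since any \drawing{\ell} of $G$ restricts to one of $G-x$. For the backward direction I would start from a \drawing{\ell} $(\Gamma,\chi)$ of $G-x$, invoke \obsref{observation:general_position} to assume general position, and apply the pigeonhole principle to $X\setminus\{x\}$: I obtain distinct $c_1, c_2 \in X\setminus\{x\}$ sharing the same cell and the same color pattern toward their common neighborhood $N\subseteq S$. For every pair $a,b\in N$ with $\chi(c_1 a)=\chi(c_1 b)$, \obsref{obs:drawing_maps_clones_in_set} applied to the already-present ``clone'' $c_2$ of $c_1$ gives $\Gamma(c_2)\in T_\Gamma(c_1,a,b)\cup H_\Gamma(c_1,a,b)$, while $\Gamma(c_2)\in C_{\Gamma,S}(c_1)$ excludes the half-plane part. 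Intersecting over all such pairs yields $\Gamma(c_2)\in A_{\Gamma,\chi}(c_1)$, so $A_{\Gamma,\chi}(c_1)\neq\emptyset$; \cref{lemma:local_clonable_characterisation} then allows us to clone $c_1$ inside its cell, relabel the clone $x$, and thus extend the drawing to a \drawing{\ell} of $G$.

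The main obstacle is the pigeonhole step: one must verify that the color pattern is well-defined across a twin class (it is, since all its vertices share $N$), that $c_1, c_2$ can always be chosen distinct from the deleted $x$ (which is why $T$ is picked strictly larger than the number of (cell, pattern) combinations after removing one element), and that the tie-shape reasoning cleanly discards the half-plane contribution from \obsref{obs:drawing_maps_clones_in_set} by means of $c_2$ lying in $c_1$'s cell. After exhaustively applying the rule, $|V(G^*)|\le |S|+2^{|S|}\cdot T=\ell^{\BigO{k}}$, as claimed.
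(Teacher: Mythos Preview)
Your proposal is correct and follows essentially the same approach as the paper: compute a $2$-approximate vertex cover $S$, partition $V(G)\setminus S$ into twin classes, and delete a vertex from any class whose size exceeds (number of cells)$\times$(number of color patterns), with the backward direction handled by pigeonhole plus \cref{lemma:local_clonable_characterisation}. Two minor remarks: the neighborhood size bound should be $|S|\le 2k$ rather than $k$ (so the color-pattern count is $\ell^{2k}$), and your direct argument that $\Gamma(c_2)\in A_{\Gamma,\chi}(c_1)$ via \obsref{obs:drawing_maps_clones_in_set} is slightly more streamlined than the paper's route, which first deletes $c_2$ to invoke the forward direction of \cref{lemma:local_clonable_characterisation} and then transfers non-emptiness of the admissible region back to the full drawing.
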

\ifshort

\fi

\appendixproof{lemma:geom_thickness_fpt_by_vc_plus_l}{geomthicknessfptbyvcplusl}{
\begin{proof}
Let $I = (G, \ell)$ be an instance of \probname{Geometric Thickness} and let $S \subseteq V(G)$ be a vertex cover of $G$ of size $k'\leq 2k$ computed, e.g., via the standard 2-approximation algorithm. 
For two vertices $u$, $v$ in $V(G)\setminus S$, we define the equivalence relation $\sim$ where $u \sim v$ if and only if $N_G(u) = N_G(v)$.
The kernel $G^*$ is obtained by applying the following data reduction rule exhaustively:

If there is a vertex $v \in V(G) \setminus S$
whose equivalence class w.r.t.\ $\sim$ (written as $[v]_\sim$) exceeds a size of  
$\ell^{k'} ( \frac{{k'}^2+k'+2}{2} + 1 )$ vertices,
delete $v$ from $G$.

Clearly, the kernel can be computed in polynomial time and \textsc{GT} is decidable since it is in $\exists\mathbb{R}$~\cite{ForsterKMPTV24}. As there are at most $2^{k'}$ equivalence classes w.r.t.\ $\sim$ in $G$ and in the reduced instance, each class has at most $\BigO{\ell^{k'} {k'}^2}$ vertices, the kernel has $\BigO{ 2^{k'} \ell^{k'} {k'}^2} = \ell^{\BigO{k}}$ vertices in total.

It remains to show that the kernelization is \emph{correct}, that is, $I$ is a positive instance of \probname{Geometric Thickness} if and only if $I'$ is a positive instance of \probname{Geometric Thickness}.
For this it suffices to show that the reduction rule is \emph{safe}, that is,
an application of the rule does not change the outcome of an instance.
Let $x$ be a vertex deleted by the rule acting on the instance $(G, \ell)$, producing the instance $(G', \ell)$.
We show that $(G, \ell)$ is a positive instance if and only if $(G', \ell)$ is one:

$(\Rightarrow) \colon$ Clearly, in any \drawing{\ell}, deleting a vertex does not introduce new crossings, hence, the reduced instance is positive as well.

$(\Leftarrow) \colon$
Let $(\Gamma', \chi')$ be a \drawing{\ell} of $G'$ in general position.
Observe that $[x]_\sim \cap V(G')$ has size at least $\ell^{k'} ( \frac{{k'}^2+k'+2}{2} + 1 )$, and that for each vertex in $[x]_\sim \cap V(G')$, the vertex has at most $k'$ incident edges in $G'$, hence its edges can be colored in at most $\ell^{k'}$ ways. Thus, by the pigeonhole principle, there is a set $C \subseteq [x]_\sim \cap V(G')$ of pairwise clones in $(\Gamma', \chi')$ of size at least $\frac{{k'}^2+k'+2}{2} + 1$ all equivalent to $x$ with respect to $\sim$.

By Theorem 28.1.1 in \cite{GeometryHandbook}, there are at most $\frac{{k'}^2+k'+2}{2}$ distinct cells induced by $S$ in $\Gamma'$.
Hence, by the pigeonhole principle, there are distinct $c_1, c_2 \in C$ with $C_{\Gamma', S}(c_1) = C_{\Gamma', S}(c_2)$.
Thus, $c_2$ is a witness certifying that $c_1$ can be cloned into their shared cell in the drawing $(\Gamma'', \chi'')$ obtained from $(\Gamma', \chi')$ by deleting $c_2$.
Hence, by \cref{lemma:local_clonable_characterisation}, $A_{\Gamma'', \chi''}(c_1) \neq \emptyset$.
Towards a contradiction, suppose $A_{\Gamma', \chi'}(c_1) = \emptyset$. Then, since $c_1$ and $c_2$ are not neighbors in $G'$, $A_{\Gamma'', \chi''}(c_1) = \emptyset$, a contradiction.
Therefore, $A_{\Gamma', \chi'}(c_1) \neq \emptyset$ and we can apply \cref{lemma:local_clonable_characterisation} to $(\Gamma', \chi')$ and $c_1$ and obtain the desired \drawing{\ell} of $G$.
\qed\end{proof}
} %

\cref{thm:fptvcn} now follows directly from \cref{lemma:geom_thickness_fpt_by_vc_plus_l} in combination with the previously established fact that the geometric thickness $\ell$ is upper-bounded by $\lceil \frac{k}{2}\rceil$ in the class of graphs of treewidth at most $k$ (which also contains all graphs of vertex cover number $k$)~\cite{DujmovicW07}.

\fptvcn*

\iflong\vspace{1cm}\fi%

\section{Parameterizing by the Feedback Edge
Set Number}
\label{section:gt_by_fen_fpt}

\toappendix{
    \ifshort
   \section{Additional Material for \cref{section:gt_by_fen_fpt}}
   \label{app:section:gt_by_fen_fpt}
   \fi
}

\looseness=-1
In this section, we show that \textsc{GT} is also fixed-parameter tractable
when parameterized by the feedback edge number (\cref{thm:fptfen}).
Our proof strategy is as follows:
After some trivial preprocessing steps and handling of special cases,
we obtain a feedback edge set $F$ of a preprocessed input graph $G'$, where $G'$ can be decomposed into a ``small'' part $G_0$ essentially consisting of $F$, and a ``small'' set of potentially ``long'' paths connecting pairs of vertices from $G_0$.
We sort the paths by length in ascending order and add them to $G_0$ sequentially, obtaining $G_1, G_2$ and so forth.
If the $i$'th path is not significantly longer than the number of edges in $G_{i-1}$, we add the path to $G_{i-1}$ and the resulting graph $G_i$ is still ``small''.
Otherwise, we stop and set $G_j \coloneqq G_{i-1}$ as our kernel (while in the case where we never encounter a ``long'' path, $G'$ itself is a kernel). The centerpiece of the correctness argument is a proof that paths that are significantly longer than the previous ones can be removed from the graph without impacting its geometric thickness.

\newcommand{\pathcount}[0]{x}

\iflong
    \paragraph*{Formalizing the Kernel.}
    \label{section:fen_kernel_formally}
\fi

    To formalize this notion, let $(G, \ell)$ be an instance of \textsc{Geometric Thickness} and let $F$ be a minimum feedback edge set of $G$ of size $k$, which can be computed in polynomial time by computing a spanning tree of $G$.
    First, we describe how to compute the kernel. Later, we will argue about its size, correctness, and running time.
    For the following, we assume $\ell \geq 2$, since for $\ell\leq 1$ the problem is polynomial-time solvable.

\begin{figure}[t]
    \centering
    \includegraphics[page=20]{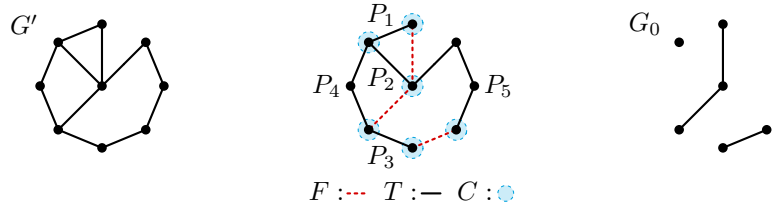}
    \vspace{-0.2cm}\caption{A graph $G'$ without degree one and zero vertices (left), a corresponding feedback edge set $F$ and forest $T$ as well as the vertex set $C$ (middle), and the graph $G_0$ (right).  }
    \label{figure:fen_kernel}
\end{figure}

    Let $G'$ be obtained by exhaustively removing degree-0 and degree-1 vertices from $G$. Consider the forest $T$ with $V(T) \coloneqq V(G')$ and $E(T) \coloneqq E(G') \setminus F$. 
    We define $C$ to be the union of the set $\Set{v \in V(T) \mid \deg_T(v) \ge 3}$ and the set of the endpoints of edges in $F$.
    Next, we consider the decomposition of $T$ into $\pathcount$ edge-disjoint paths whose endpoints are in~$C$; see \cref{figure:fen_kernel}. Observe that the decomposition is unique as $V(T)\setminus C$ only contains degree-2 vertices.
    We refer to these paths, sorted by length in ascending order, by $P_1, \ldots, P_\pathcount$.
    Let $G_0$ be the graph with $V(G_0) \coloneqq C$ and $E(G_0) \coloneqq F$ and
    define $G_i$ as $G_0 \cup P_1 \cup \ldots P_i$ for $i \in [\pathcount]$.
    Observe that $G_\pathcount = G'$. 
    Finally, the kernel is given by $(G_j, \ell)$, where $j$ is the smallest element of $\Set{0, \ldots, \pathcount - 1}$ such that $ |E(P_{j+1})| > 2 \left( |E(G_{j})| + \pathcount \right)$ if such an element exists.
    Otherwise, we set $j \coloneqq \pathcount$.

Clearly, the kernel $G_j$ can be computed in polynomial time and is decidable since \textsc{GT} is in $\exists\mathbb{R}$~\cite{ForsterKMPTV24}. 
Furthermore, using the recursive bounds on the graphs $G_0, \ldots, G_j$, one can show that the kernel has at most $\fenkernelsizeexpr$ vertices\iflong~(\cref{lemma:fen_kernel_size})\fi, and that the instance $(G, \ell)$ is positive if and only if $(G_j, \ell)$ is\iflong~(\cref{lemma:fen_kernel_correctness})\fi. 
For the latter, the key to derive the non-trivial direction of the correctness proof, i.e., how to reintroduce the removed paths without increasing the thickness, is to draw the removed paths as subdivided straight lines.
Since these paths are ``sufficiently long'', all potential monochromatic crossings can be resolved by subdividing and coloring the segments appropriately. 
In total, we have derived \cref{thm:fptfen}.

\fptfen*

\toappendix{
\begin{restatable}{lemma}{fenkernelsize}\label{lemma:fen_kernel_size}
    The graph $G_j$ has at most $\fenkernelsizeexpr$ vertices.
\end{restatable}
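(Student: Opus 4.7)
The plan is to control $|V(G_j)|$ by first bounding the size of $C$ and the number of paths $x$ produced by the decomposition of $T$, and then by exploiting the stopping rule for $j$ to derive an exponential recurrence for the edge counts $|E(G_i)|$.

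First, I would write $C=C_1\cup C_2$, where $C_1$ consists of vertices of $T$-degree at least $3$ and $C_2$ consists of endpoints of edges of $F$; trivially $|C_2|\leq 2k$. Since the preprocessing step removed all degree-$0$ and degree-$1$ vertices of $G$, the graph $G'$ has minimum degree $2$, so every leaf of $T$ is incident to an edge of $F$ and therefore lies in $C_2$. The standard forest identity $\sum_v(\deg_T(v)-2)=-2c$, where $c$ is the number of connected components of $T$, then implies that the number of branching vertices of $T$ is at most the number of leaves of $T$, giving $|C_1|\leq|C_2|\leq 2k$ and thus $|C|\leq 4k$. Contracting every maximal path of $T$-degree-$2$ vertices yields a topological forest $T^*$ whose edges are in bijection with $P_1,\dots,P_x$ and whose vertex set is contained in $C$; since $T^*$ is a forest, $x\leq |C|\leq 4k$.

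Next, I would unroll the recurrence implied by the stopping condition: for every $i\in[j]$ the inequality $|E(P_i)|\leq 2(|E(G_{i-1})|+x)$ holds by the choice of $j$, hence $|E(G_i)|\leq 3|E(G_{i-1})|+2x$. Adding $x$ to both sides gives $|E(G_i)|+x\leq 3(|E(G_{i-1})|+x)$, and iterating from $|E(G_0)|=|F|=k$ yields
\[
|E(G_j)|+x \;\leq\; 3^j(k+x) \;\leq\; 3^{4k}(k+4k) \;=\; 5k\cdot 81^k.
\]
Each path $P_i$ has its endpoints in $C$ and contributes exactly $|E(P_i)|-1$ new internal vertices to $G_j$, so
\[
|V(G_j)| \;\leq\; |C|+\sum_{i=1}^{j}(|E(P_i)|-1) \;\leq\; 4k + |E(G_j)| - k \;\leq\; 10k\cdot 81^k,
\]
which matches the claimed bound.

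The only mildly subtle step is the bound on $|C|$, which crucially depends on the minimum-degree-$2$ property established by the preprocessing (without it, the number of branching vertices of $T$ could not be controlled by the number of feedback edges); the remaining steps are straightforward telescoping and counting.
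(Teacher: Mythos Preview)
Your proof is correct and follows essentially the same approach as the paper: bound $|C|$ and $x$ by $4k$ via the leaf--branching inequality in $T$ (using that leaves of $T$ must be endpoints of feedback edges), then unroll the recurrence $|E(G_i)|\leq 3|E(G_{i-1})|+2x$ to obtain the $81^k$ factor. The only cosmetic difference is that the paper finishes with the cruder estimate $|V(G_j)|\leq 2|E(G_j)|$, whereas you count internal path vertices directly; both yield the stated bound.
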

\begin{proof}
    First, we show that the number of leaves in $T$ is at most $2k$.
    Let $v$ be a leaf in $T$.
    Since $G'$ does not contain pendant and isolated vertices, $v$ is adjacent to at least two different edges $e_1, e_2$ in $G'$.
    Not both edges can be part of $T$, for then $v$ would not be a leaf in $T$. 
    Thus, either $e_1$ or $e_2$ is in $F$.
    Hence, in total, the number of leaves in $T$ is at most $2|F| = 2k$.
    
    Next, we observe that $|C| \leq 4k$.
    Recall that $C$ is the union of the set of endpoints of edges in $F$ and all vertices in $T$ that have degree at least three in $T$.
    Clearly, the number of underlying vertices of $F$ is at most $2|F| = 2k$, and the number of vertices of degree at least three in $T$ is also upper-bounded by $2k$, i.e., the number of leaves in $T$. Note that the above bound on $|C|$ immediately implies that $\pathcount \leq 4k$.

    Finally, we argue that the computed kernel has at most $\fenkernelsizeexpr$ vertices.
    Let $i \in \{ 0, \ldots, j - 1\} $.
    Observe that $ |E(G_{i+1})| - |E(G_{i})| \leq 2 \left( |E(G_{i})| + \pathcount \right)$ by construction.
    Solving the recursion and using
    $|V(G_0)| = |C| \leq 4k$, $j \leq \pathcount \leq 4k$, $|V(G_j)| \leq 2|E(G_j)|$ finally yields $|V(G_j)| \leq \fenkernelsizeexpr$.
\qed\end{proof}
} %

\toappendix{
\begin{restatable}{lemma}{fenkernelcorrectness}\label{lemma:fen_kernel_correctness}
The instance $(G, \ell)$ is a positive instance of \textsc{\textup{GT}} if and only if $(G_j, \ell)$~is.
\end{restatable}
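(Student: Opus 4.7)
The plan is to prove the two directions separately. The forward direction is immediate: $G_j$ is a subgraph of $G' \subseteq G$, so any geometric $\ell$-layer drawing of $G$ restricted to $V(G_j)$ and $E(G_j)$ is already a geometric $\ell$-layer drawing of $G_j$.

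For the non-trivial backward direction, I would begin with a geometric $\ell$-layer drawing $(\Gamma_j,\chi_j)$ of $G_j$ in general position (via \obsref{observation:general_position}) and extend it to a drawing of $G$ in two stages: first by reinserting the removed paths $P_{j+1},\ldots, P_x$ in order of increasing length, then by reinserting the degree-$0$ and degree-$1$ vertices deleted during preprocessing. Each $P_i$ will be drawn as a (slightly perturbed) subdivision of the straight-line segment between its already placed endpoints $u,v \in C \subseteq V(G_j)$.

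The heart of the argument is a simple crossing count. When processing $P_i$, the segment $\Gamma_j(u)\Gamma_j(v)$ crosses at most $|E(G_j)|$ edges of $G_j$, and it crosses each previously reinserted path $P_{i'}$ (with $j+1 \le i' < i$) in at most one edge, because the edges of $P_{i'}$ all lie on a single straight line. Hence the segment meets at most $|E(G_j)| + (i - j - 1) \le |E(G_j)| + x$ previously drawn edges in total. Since the paths are processed in ascending order of length and $|E(P_i)| \ge |E(P_{j+1})| > 2(|E(G_j)| + x)$, strictly more edges are available in $P_i$ than there are crossings along the segment. I would therefore distribute $P_i$'s internal vertices along the segment so that each resulting edge of $P_i$ crosses at most one previously drawn edge, and then color each such edge with any color from $[\ell]$ different from the (at most one) color it would otherwise conflict with; this choice always exists because $\ell \ge 2$. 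Collinear edges of $P_i$ cannot cross each other, so no monochromatic crossings arise within $P_i$ itself.

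Once all paths have been reinserted, the construction yields a geometric $\ell$-layer drawing of $G'$. The degree-$0$ and degree-$1$ vertices are finally restored in reverse order of their deletion: a degree-$0$ vertex is placed at any free point, and a degree-$1$ vertex is placed inside a sufficiently small open wedge around its unique neighbor that contains no other edge, so that the new edge crosses nothing and can be assigned any color. The main technical obstacle I expect is maintaining general position throughout the path insertion phase, so that no subdivision vertex accidentally lies on an existing edge and no three drawn vertices become collinear in a problematic way; this can be handled by arbitrarily small perturbations of the internal vertices, since the set of bad positions has measure zero and the relevant crossing pattern along each segment is robust under such perturbations.
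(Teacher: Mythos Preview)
Your proposal is correct and takes essentially the same approach as the paper: reinsert each long path $P_i$ as a subdivided straight-line segment between its already-drawn endpoints in $C$, bound the number of crossings along that segment by $|E(G_j)|+x$ (at most one per previously inserted path, since those are themselves subdivided segments), and exploit $|E(P_i)|>2(|E(G_j)|+x)$ together with $\ell\ge 2$ to recolor subdivided edges so that no monochromatic crossing survives. The only cosmetic difference is that the paper places two subdivision vertices tightly around each \emph{monochromatic} crossing and recolors the short middle piece, whereas you distribute the internal vertices so that every edge of $P_i$ meets at most one previously drawn edge and then recolor accordingly; both variants are equally valid.
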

\begin{proof}
Clearly, $(G, \ell)$ is a positive instance of \textsc{\textup{GT}} if and only if $(G', \ell)$ is, where we recall that $G'$ was obtained by removing vertices of degree at most $1$. 
It remains to be shown that $(G', \ell)$ is a positive instance of \textsc{\textup{GT}} if and only if $(G_j, \ell)$ is.

$(\Rightarrow) \colon$
Let $(\Gamma, \chi)$ be a \drawing{\ell} of $G'$.
Then, since $G_j$ is a subgraph of $G'$, the restriction of $(\Gamma, \chi)$ to $G_j$ is a \drawing{\ell} of $G_j$.

$(\Leftarrow) \colon$
Recalling \obsref{observation:general_position}, let $(\Gamma_j, \chi_j)$ be a \drawing{\ell} of $G_j$ in general position.
We aim to iteratively extend $(\Gamma_j, \chi_j)$ with $P_{j+1}$ up to $P_\pathcount$ and obtain the drawings $(\Gamma_{j+1}, \chi_{j+1})$ up to $(\Gamma_\pathcount, \chi_\pathcount)$, while inductively ensuring that $(\Gamma_i,\chi_i)$ is a \drawing{\ell} of $G_i$ that represents $P_i$ as a subdivided straight-line segment for each $i \in \{ j + 1 , \ldots, \pathcount \}$.

\begin{figure}[t]
    \centering
    \includegraphics[page=16]{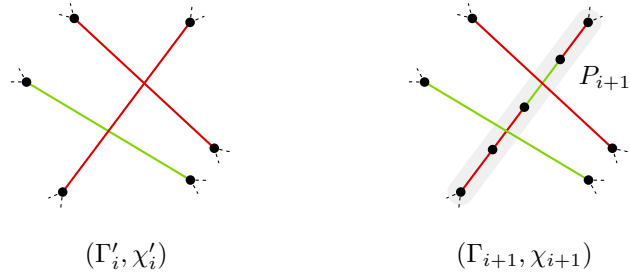}
    \vspace{-0.2cm}\caption{The drawing $(\Gamma'_i, \chi'_i)$ with $P_{i+1}$ represented as a single red edge (left), and $(\Gamma_{i+1}, \chi_{i+1})$ with resolved crossings and $P_{i+1}$ fully drawn (right). In this example, one red crossing needed to be resolved and $P_{i+1}$ contains 3 internal vertices, hence one additional subdivision was undertaken.   }
    \label{figure:resolve_crossing}
\end{figure}

The inductive step is as follows:
Consider $(\Gamma_{i}, \chi_i)$.
We now describe how to obtain $(\Gamma_{i+1}, \chi_{i+1})$ from said drawing by adding $P_{i+1}$. 
Let $(\Gamma_{i}', \chi_{i}')$ be the drawing obtained by 
extending $(\Gamma_{i},\chi_i)$ with a single edge of some fixed color, say red, that connects the starting- and the end-point of $P_{i+1}$.
Now consider the set of monochromatic crossings in $(\Gamma_{i}', \chi_{i}')$:
The single edge representing $P_{i+1}$ may cross with each edge of $G_j$, and furthermore, it may cross with at most one edge of each $P_{j+1}, \ldots P_i$, as these paths are all represented using a subdivided straight-line segment.
Thus, we may obtain at most $|E(G_j)| + \pathcount$ monochromatic crossings.
By construction, $P_{i+1}$
has at least $2 (|E(G_j)| + \pathcount)$ internal vertices.
Hence, we can afford to resolve each monochromatic crossing by subdividing the straight-line segment representation ``closely'' before and after the  crossing, and assigning the new edge connecting the two subdivision vertices a color different than the color of the other edge involved in the monochromatic crossing.
Here, ``closely'' means close enough such that the edge between the two subdivision vertices crosses precisely the other edge involved in the monochromatic crossing.
If after resolving all crossings we have introduced a number of vertices less than the number of internal vertices of $P_{i+1}$, we subdivide the straight-line segment representation sufficiently often at arbitrary positions, inheriting the edge coloring. 
See \cref{figure:resolve_crossing} for an illustration.
\qed\end{proof}
} %

\section{Geometric Thickness Extension}
\label{section:hardness}

\looseness=-1
We analyze the complexity of \textsc{GTE} under three natural parameterizations.
When only edges are missing, using their count as parameter, we obtain an \FPT{} algorithm via a straightforward branching argument\iflong~(see \cref{app:gte_fpt_by_edges})\fi.
In the general setting where also vertices are missing, on one hand we establish \NP-hardness even if all that is missing from the graph are two vertices and their incident edges. This is shown by developing a geometric analogue of the technique of Depian et al.\ \cite{depian2024parameterized}\iflong~(see \cref{section:gte_by_vertices_para_np})\fi, and excludes fixed-parameter as well as \XP-tractability when parameterizing by the vertex deletion distance. On the other hand, in \cref{thm:xpwextend} we show that the problem is in \XP\ and \W{1}-hard parameterized by the total number of missing vertices and edges.

\toappendix{
\ifshort
   \section{Details of the \ETR{}-Formulation}
\fi
}

\looseness=-1
First, we show \XP-membership.
Our strategy is to express \textsc{GT} and \textsc{GTE} as an Existential Theory of the Reals (\ETR{}) formula and to decide said formula using the algorithm by Grigoryev et al.~\cite{GrigoryevV92}. We recall that while membership of \GT\ in $\exists\mathbb{R}$ was recently established \cite{ForsterKMPTV24}, the proof does not provide an explicit formula (which is required for our approach).
Intuitively, an \ETR{} formula is a first order formula where only existential quantification over the reals is allowed,
the terms are real polynomials and $\Set{<, \leq, =, >, \geq}$ are the allowed predicates.
We refer to \cite{toth2017handbook} for a formal definition and introduction to the existential theory of the reals.

\begin{restatable}{lemma}{etrformulation}
\label{lemma:etr_formulation}
    An instance $(G, \ell)$ of \textsc{GT} with $n = |V(G)|$ and $m = |E(G)|$ can be expressed as an \ETR{} formula in $2n + m$ variables with polynomials of total degree at most 6, using \BigO{n^4} polynomial (in)equalities.
\end{restatable}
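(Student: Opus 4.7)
The plan is to build the ETR formula directly from the geometric and combinatorial conditions defining a geometric $\ell$-layer drawing. I would introduce two coordinate variables $x_v, y_v$ per vertex and one color variable $c_e$ per edge, yielding exactly the $2n + m$ variables required. A satisfying assignment is then interpreted as placing each vertex $v$ at $(x_v, y_v)$ and assigning edge $e$ to color $c_e$.

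Next, I would encode three groups of constraints. First, \emph{validity of the drawing}: for each pair of vertices, a distinctness inequality of the form $(x_u - x_v)^2 + (y_u - y_v)^2 > 0$ of degree $2$, and for each triple of vertices, a non-collinearity inequality (also degree $2$) ensuring general position; by \obsref{observation:general_position}, the latter restriction is without loss of generality. Second, \emph{valid color assignment}: for each edge $e$, the condition $c_e \in \{1, \ldots, \ell\}$ is expressed as the disjunction $\bigvee_{i=1}^{\ell} (c_e = i)$ of $\ell$ degree-$1$ equalities. Third, \emph{no monochromatic crossings}: for each pair of edges $e = pq$ and $f = rs$, I would use the standard orientation test: $e$ and $f$ fail to cross iff $r, s$ lie on the same side of the line through $p, q$ or vice versa, which can be written as the non-negativity of the product of two $2{\times}2$ determinants. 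Each such product is a polynomial of degree $4$, so the full constraint becomes the $3$-term disjunction $(c_e \neq c_f) \lor (A_{e,f} \geq 0) \lor (B_{e,f} \geq 0)$, where $A_{e,f}, B_{e,f}$ are the two orientation-product polynomials in the coordinate variables.

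Finally, I would verify the parameters and correctness. Distinctness and general position contribute $\BigO{n^2} + \BigO{n^3}$ atomic formulas; color assignments contribute $\BigO{m\ell} \leq \BigO{n^4}$, where we may assume $\ell \leq m \leq \binom{n}{2}$, else the instance is trivially positive; crossing constraints contribute $\BigO{m^2} = \BigO{n^4}$. All polynomials have degree at most $4 \leq 6$. Correctness in both directions is then routine: a satisfying assignment yields positions and colors of a valid geometric $\ell$-layer drawing, and conversely, any such drawing can, by \obsref{observation:general_position}, be assumed to be in general position and hence directly gives a satisfying assignment. The step I expect to be the main obstacle, and the reason the bookkeeping just barely works, is the color-assignment constraint: the naive single-equation encoding $\prod_{i=1}^{\ell} (c_e - i) = 0$ has degree $\ell$, which can be much larger than the target bound of $6$. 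The key trick will be to trade this one high-degree equation for $\ell$ linear equations combined disjunctively; this keeps the total degree low at the cost of an extra factor $\ell$ in the atom count, which still fits within $\BigO{n^4}$.
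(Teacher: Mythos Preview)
Your proposal is correct and follows essentially the same approach as the paper: the same $2n+m$ variables, a general-position constraint over all vertex triples (justified via \obsref{observation:general_position}), the disjunctive encoding $\bigvee_{i\in[\ell]}(c_e=i)$ for the color range (with the case $\ell>m$ handled trivially), and the orientation-product test for non-crossing of same-colored edge pairs. Minor cosmetic differences are that the paper omits the redundant pairwise-distinctness clause (it is implied by non-collinearity of triples), restricts the crossing clause to edge pairs on four distinct endpoints, and uses strict inequalities in the orientation test; your use of $\geq$ is harmless since general position forces the products to be nonzero for non-adjacent pairs and makes adjacent pairs vacuously satisfied. Your degree bound of $4$ is in fact sharper than the paper's stated $6$ (the $3\times 3$ orientation determinant with a column of ones has total degree $2$, not $3$), but either way the lemma's ``at most $6$'' is met.
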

\appendixproof{lemma:etr_formulation}{etrformulation}{
\begin{proof}
We assume $G$'s vertices and edges to be ordered, that is,
$V(G) = \{ v_1, v_2, \dots, v_{n} \}$ and $E(G) = \{ e_1, e_2, \dots, e_{m} \}$. 
We define an \ETR{} formula $F$ that is satisfiable if and only if $(G, \ell)$ is a positive instance of \textsc{GT} over the variables $\{x_1, x_2, \dots, x_{n}\}$, $\{y_1, y_2, \dots, y_{n}\}$, and $\{c_1, c_2, \dots, c_{m}\}$.
The former two variable sets model the positions of the vertices in the plane, whereas the latter set models the edge-coloring.

Before we give the formula itself, we introduce a shorthand:
Let $i, j, k \in [n]$. We write $A(i, j, k)$ to refer to a full expansion of the determinant
\begin{equation*}
\begin{vmatrix}
    x_i & y_i & 1 \\
    x_j & y_j & 1 \\
    x_k & y_k & 1
\end{vmatrix}.
\end{equation*}
If $\ell > m$, we set $F \coloneqq \top$, as then each edge can simply be assigned a unique color.
Otherwise, we define
\begin{equation*}
    F \coloneqq \exists x_1, x_2, \ldots, x_{n},y_1, y_2, \ldots, y_{n},c_1, c_2, \ldots, c_{m} \colon F_1 \land F_2 \land F_3,
\end{equation*}
where $F_1, F_2, F_3$ are defined as follows:

\iflong\allowdisplaybreaks\fi
\begin{gather*}
    F_1 \coloneqq \bigwedge_{\substack{i,j,k \in [ n ], \\ |\{i,j,k\}|=3}} A(i, j, k) \neq 0,\\
    F_2 \coloneqq \bigwedge_{i \in [m]} \bigvee_{j \in [\ell]} c_i = j \text{, and}\\
    \begin{aligned}
    F_3 \coloneqq \bigwedge_{\substack{e_i = v_av_b, e_j = v_cv_d \in E(G),\\a < b, c < d, |\{a,b,c,d\}|=4}} &c_i = c_j \implies \\
    &( A(a, b, c) A(a, b, d) > 0 \lor A(c, d, a) A(c, d, b) > 0).
    \end{aligned}
\end{gather*}

Here, $F_1$ prescribes that no three vertices be collinear (this we can require due to \obsref{observation:general_position}),
$F_2$ models that the edges be colored using $\ell$ distinct colors,
and $F_3$ prescribes that if two independent edges are assigned the same color, they may not cross.

The line segment intersection test used in $F_3$ is folklore, nonetheless, for completeness sake, we provide a derivation:
Observe that, assuming all $(x_i, y_i)$ are in general position, $A(a, b, c)$ is positive if $(x_c, y_c)$ lies to the left of the line through $(x_a, y_a)$ and $(x_b, y_b)$, and is negative otherwise \cite{toth2017handbook}.
For indices $a, b, c, d$, we write $c \sim_{a b} d$ if $(x_c, y_c)$ and $(x_d, y_d)$ are one the same side of the line through $(x_a, y_a)$ and $(x_b, y_b)$. Clearly, $c \sim_{a b} d$ if and only if $A(a, b, c) A(a, b, d) > 0$.
We now want to find a criterion when the line segments from $(x_a, y_a)$ to $(x_b, y_b)$ and $(x_c, y_c)$ to $(x_d, y_d)$ do not cross. 
Clearly, if $c \sim_{ab} d$ (resp. $a \sim_{cd} b$), the line segments do not cross, as then, one line segment lies completely on one side of the other.
Conversely, it is easy to see that if $c \nsim_{ab} d \land a \nsim_{cd} b$, the line segments do cross.
Hence, the line segments do not cross if and only if $c \sim_{ab} d \lor a \sim_{cd} b$, which is equivalent to $A(a, b, c) A(a, b, d) > 0 \lor A(c, d, a) A(c, d, b) > 0$. 

Finally, observe that the expansion of a single determinant has total degree 3, hence, each polynomial in $F$ has total degree at most 6.
\qed\end{proof}
} %

\toappendix{

\label{section:gte_by_edges_w1_hard}

\newcommand{\bl}[0]{B_l}
\newcommand{\bg}[0]{B_g}
\newcommand{\g}[1]{\widetilde{#1}}

As an immediate corollary, the \XP-tractability part of \cref{thm:xpwextend} follows.
\begin{proof}[\XP-tractability in \cref{thm:xpwextend}]
    Consider an instance $(H, G, (\Gamma_H, \chi_H), \ell )$ of \textsc{GTE}.
    Let $n' \coloneqq |V(G)| - |V(H)|, m' \coloneqq |V(G)| - |V(H)|$, that is, the number of missing vertices and edges, and $k \coloneqq n' + m'$. Furthermore, let $L$ be the number of bits required to encode the reals that define the positions given by $\Gamma_H$ and the colors chosen by $\chi_H$.
    We compute the formula $F$ defined in \cref{lemma:etr_formulation} and substitute all known variables with constants, that is, $(x_i, y_i) \gets \Gamma_H(v_i)$ for all $v_i \in V(H)$ and $c_i \gets \chi_H(e_i) $ for all $e_i \in E(H)$.
    Furthermore, in subformula $F_1$, we delete all conjuncts that contain no free variables, and in subformula $F_3$, we delete all conjuncts that contain no free variables or precisely one free variable, where the other three predetermined positions are collinear. This ensures that the formula remains correct even if $\Gamma_H$ is not in general position.
    The new formula, call it $F'$, has $2n'+m' \leq 2k$ variables.
    Using a suitable {\ETR} solver, e.g.\ \cite{GrigoryevV92}, we can decide $F'$ in time $L^{\BigO{1}} \cdot (9 \cdot \BigO{n^4})^{\BigO{(2n'+m')^2}} = L^{\BigO{1}} \cdot n^{\BigO{k}}$.
    Hence, \textsc{GTE} parameterized by $k$ is \XP-tractable.
    \end{proof}
}

To complete the proof of \cref{thm:xpwextend}, we reduce from the \textsc{Multicolored Clique} problem \cite{CyganFKLMPPS15}.
Note that we obtain \W{1}-hardness even when additionally parameterizing by the total number of layers.
An instance of said problem consists of a $k$-partite graph $X$ with a vertex partition $V(X) = V^1 \cupdot V^2 \cupdot \dots \cupdot V^k$ and an integer $k$ as parameter. The instance is positive if and only if $X$ contains a $k$-clique. 

\begin{figure}[t]
    \centering
    \includegraphics[page=19]{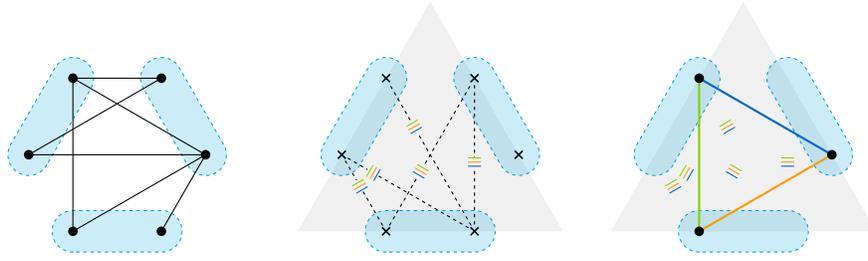}
    \vspace{-0.2cm}\caption{An instance $(X, k=3)$ of \textsc{Multicolored Clique} (left), the resulting instance of \textsc{GTE} where crosses denote possible vertex positions and non-edges of $X$ are drawn as dashed lines (middle), and a valid extension showing that $X$ contains a $K_3$ (right).}
    \label{figure:w1_hardness_idea}
\end{figure}
Our reduction is based on the following idea, illustrated in \cref{figure:w1_hardness_idea}:
Suppose \textsc{GTE} allowed us to specify for each missing vertex a set of possible positions.
We construct an instance of \textsc{GTE} where a drawing is to be extended by a $k$-clique.
For the $i$'th vertex of the $k$-clique, we allow precisely $|V^i|$ possible positions, all placed on the $i$'th side of a regular $k$-gon.
To avoid monochromatic crossings between clique-edges in the extension, we allow for one color for each edge of the $k$-clique.
For each non-edge of $X$, we ``block the visibility'' between the two potential positions corresponding to endpoints of the non-edge via predrawn edges.
Now, if it is possible to draw the clique without monochromatic crossings, the set of chosen positions directly gives the desired $k$-clique in $X$. 
We refer to \iflong~\cref{section:w1_appendix}\else~the full version of this paper \cite{this_paper_arxiv_version}\fi~for the full details of our construction.

\begin{figure}[t]
    \centering
    \includegraphics[page=21]{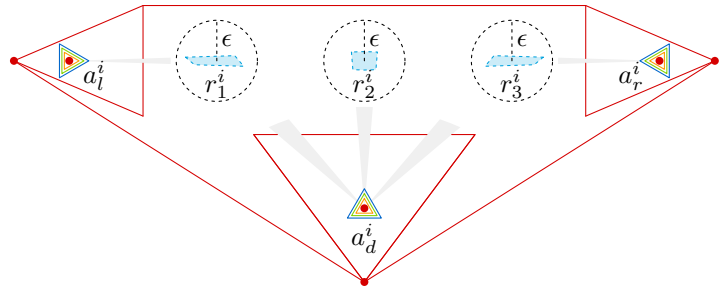}
    \vspace{-0.2cm}\caption{A choice gadget for $V^i$ with $|V^i|=3$ and the clique vertex $c^i$ missing.}
    \label{figure:choice_gadget_simplified}
\end{figure}

 As \textsc{GTE} does not allow us to specify possible vertex positions, our approach is to instead attach $k$ ``choice gadgets'' around the $k$-gon, which allow us to constrain the position of a vertex to a set of disjoint regions. Intuitively, this construction (illustrated in \cref{figure:choice_gadget_simplified}), works as follows: The $i$'th vertex of the clique, call it $c^i$, is connected to three so-called \emph{anchor vertices} $a_l^i, a_d^i, a_r^i$ in the target graph $G$, but not in the predrawn graph $H$, as $c^i \notin V(H)$. Around each anchor vertex, we closely position triangles in all colors except one, say red. This forces the edges from $\{a_l^i, a_d^i, a_r^i\}$ to $c^i$ to be drawn in red in any extension of the given drawing. Around each anchor vertex, we draw a red triangle with specific ``holes'' on the perimeter. For each hole, this induces a cone where $c^i$ can be drawn without creating a red crossing with that triangle. Altogether, the intersection of all cones yields a union of disjoint regions $r_1^i, \dots, r_{|V^i|}^i$, modelling the choice of $v \in V^i$. 
 \iflong
 In \cref{lemma:anchor_segments}, we show that in any valid extension of the drawing, $c^i$ is necessarily drawn inside one of these regions, and in \cref{lemma:regions_in_disks_and_disks_disjoint}, we show that each region is contained inside a disk of radius $\epsilon$ and predetermined center, where we are free to choose $\epsilon$ arbitrarily small.
 \fi
 \ifshort 
One can show that every valid extension requires $c^i$ to be placed inside one of these regions and each region must be fully contained within a disk of sufficiently small radius $\epsilon$ with a predetermined center\iflong~(cf. \cref{section:drawing_single_choice_gadget})\fi. %
 \fi

We arrange the $k$ choice gadgets around a regular $k$-gon. Subsequently, we insert the \emph{blocking edges} that model the structure of $X$. With the positions constrained to disks instead of points, this becomes non-trivial, as the ``lines of sight'' to block obtain a non-zero area. See \cref{figure:w1_reduction_example_short} for an illustration of the lines of sight and the complete reduction\iflong~(\cref{section:w1_appendix})\fi. In particular, we need to be careful not to block ``too much'' (i.e., ensuring edges of $X$ are preserved), while avoiding monochromatic crossings when inserting the blocking edges (i.e., ensuring non-edges of $X$ are preserved).  

Leveraging trigonometry, we calculate a set of permissible positions where we may insert the blocking edges. To ensure that this set is non-empty, we calculate a scaling factor that, when applied to the entire construction, ensures this set to be non-empty. One can show that this construction is always possible\iflong~(cf. \cref{section:drawing_global_blocking_edges})\fi.
The correctness of the reduction follows from the aforementioned properties.
Combined with the \XP-tractability result shown in the previous subsection, we have derived \cref{thm:xpwextend}.

\xpwextend*

\toappendix{
\section{GTE Parameterized by the Number of Missing Edges}
\label{app:gte_fpt_by_edges}

In this subsection, we consider the restricted case of \GTE where only edges are missing from the pre-drawn subgraph $H$ of $G$, i.e., $V(H) = V(G)$.
Let $(\Gamma_H, \chi_H)$ be a geometric $\ell$-layer drawing of $H$.
As all vertex positions are fixed,
it remains to determine the color of the new edges.
Consider an edge $e \in E(G) \setminus E(H)$ that we want to insert into $(\Gamma_H, \chi_H)$.
We can determine in polynomial time the set $C(e) \subseteq [\ell]$ of feasible colors that would not introduce a monochromatic crossing between $e$ and an edge from $H$.
Assume $\Size{C(e)} \geq \Size{E(G) \setminus E(H)}$ and let $(\Gamma_{G \setminus \{e\}}, \chi_{G \setminus \{e\}})$ be a hypothetical geometric $\ell$-layer drawing of the graph $G \setminus \{e\}$ that extends $(\Gamma_H, \chi_H)$. 
Any edge $e' \in E(G) \setminus E(H)$ with $e' \neq e$ that crosses $e$ in $\Gamma_{G \setminus \{e\}}$ removes the color $\chi_{G \setminus \{e\}}(e')$ from $C(e)$.
However, as $\Size{C(e)} \geq \Size{E(G) \setminus E(H)}$, we will still be able to assign a color to $e$ that does not create a  monochromatic crossing, i.e., we can always extend $(\Gamma_{G \setminus \{e\}}, \chi_{G \setminus \{e\}})$ to a geometric $\ell$-layer drawing $(\Gamma_G, \chi_G)$ of the graph $G$ that extends $(\Gamma_H, \chi_H)$.
We formalize this intuition in the following theorem.

\newcounter{oldtheorem}
\setcounter{oldtheorem}{\value{theorem}}
\setcounter{theorem}{\the\numexpr\getrefnumber{thm:fptextend}-1\relax}

\begin{restatable}{theorem}{fptextendstar}
\GTE when only $k$ edges are missing from the provided partial drawing is fixed-parameter tractable when parameterized by $k$.
\end{restatable}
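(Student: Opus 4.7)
The plan is to design a bounded-depth branching algorithm that directly formalizes the intuition sketched in the paragraph preceding the theorem. Let $R \coloneqq E(G) \setminus E(H)$ denote the set of uncolored edges, so $\Size{R} \leq k$. Since all vertex positions are already fixed by $\Gamma_H$, the only remaining decisions concern the colors assigned to edges of $R$. For each $e \in R$, the set $C(e) \subseteq [\ell]$ of colors whose choice for $e$ would not induce a monochromatic crossing with an edge of $H$ is computable in polynomial time by checking crossings against each edge of $E(H)$.

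At each recursive call, the algorithm inspects $\Size{C(e)}$ for each $e \in R$ and proceeds in one of two ways. If there exists some $e \in R$ with $\Size{C(e)} \geq \Size{R}$, then $e$ is deferred: the algorithm recurses on the subinstance in which $e$ is removed from $G$, and upon the return of a drawing $(\Gamma_H, \chi')$, it extends $\chi'$ by coloring $e$ with any color in $C(e) \setminus \{\chi'(e') : e' \in R \setminus \{e\}, \ e' \text{ crosses } e \text{ in } \Gamma_H\}$; such a color always exists, as the excluded set has size at most $\Size{R} - 1 < \Size{C(e)}$. Otherwise, every $e \in R$ has $\Size{C(e)} < \Size{R} \leq k$; the algorithm then picks any $e \in R$ and branches on its color, recursing once for each $c \in C(e)$ with $(\Gamma_H, \chi_H)$ augmented by the assignment $\chi_H(e) = c$. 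Correctness in both cases is immediate: the deferred edge can always be colored at the end, and the branching step is exhaustive because any extending drawing must assign $e$ some color in $C(e)$ regardless of the colors given to other missing edges.

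In both cases $\Size{R}$ strictly decreases by one per recursive call, bounding the recursion depth by $k$; only the branching case multiplies the number of recursive calls, by a factor of at most $k - 1$. Consequently, the recursion tree has at most $k^k$ leaves and each node performs polynomial work, for a total running time of $k^{\BigO{k}} \cdot \Size{V(G)}^{\BigO{1}}$, establishing fixed-parameter tractability. The only conceptually non-trivial step is the correctness of deferral, which rests on the pigeonhole observation that the other missing edges can together block at most $\Size{R} - 1$ of the at least $\Size{R}$ colors available for $e$; once this is in place, the remainder of the analysis is routine recursion-tree bookkeeping.
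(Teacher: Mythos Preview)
Your proposal is correct and takes essentially the same approach as the paper: both partition the missing edges into those with few available colors (branch on these, at most $k$ choices each) and those with at least as many available colors as there are missing edges (defer these, color greedily at the end via the pigeonhole observation). The paper performs the separation once upfront against the fixed threshold $k$ and then branches over all colorings of the ``small'' edges simultaneously, whereas you interleave deferral and branching recursively against the dynamic threshold $|R|$; this is purely an organizational difference and yields the same $k^{\BigO{k}}\cdot n^{\BigO{1}}$ bound.
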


\setcounter{theorem}{\value{oldtheorem}}

\begin{proof}
Let $(H, G, (\Gamma_H, \chi_H), \ell)$ be an instance of \GTE where only $k$ edges are missing from $H$.
As $V(G) = V(H)$, the drawing $\Gamma_G$ of $G$ is fully determined. It remains to compute $\chi_G$ to obtain a geometric $\ell$-layer drawing $(\Gamma_G, \chi_G)$ of $G$ that extends  $(\Gamma_H, \chi_H)$.
Hence, we will in the following focus on computing the coloring $\chi_G$.

We compute for each edge $e = vu \in E(G) \setminus E(H)$ the set $C(e)$ of feasible colors that prevent a monochromatic crossing between $e$ and an edge from $H$.
As the position of $v$ and~$u$ is fixed in $\Gamma_G$, this can be done in polynomial time.
Afterwards, we compute the graph~$\widetilde{G}$ with $V(\widetilde{G}) = V(H)$ and $E(\widetilde{G}) = E(H) \cup \{e \in E(G) \setminus E(H) \mid \Size{C(e)} < k\}$.
Note that $\Size{E(\widetilde{G}) \setminus E(H)} \leq k$ and for every edge $e \in E(\widetilde{G}) \setminus E(H)$, there are less than $k$ feasible colors. %
Hence, in any solution, $\chi_G(e)$ must be one of these colors.
As we want to insert at most $k$ edges, each of which can be colored in at most $k$ colors, we can branch over all possible colorings of the edges of $E(\widetilde{G}) \setminus E(H)$.
This gives us \BigO{k^k} different branches to consider and each branch defines a coloring $\widetilde{\chi_G}$ of the edges of $\widetilde{G}$ that extends $\chi_H$.
Thus, we can obtain in each branch a tuple $(\widetilde{\Gamma_G}, \widetilde{\chi_G})$ and check in polynomial time whether $(\widetilde{\Gamma_G}, \widetilde{\chi_G})$ defines a geometric $\ell$-layer drawing.
Note that if so, $(\widetilde{\Gamma_G}, \widetilde{\chi_G})$ extends $(\Gamma_H, \chi_H)$ by construction. 
Clearly, if in none of the branches the tuple $(\widetilde{\Gamma_G}, \widetilde{\chi_G})$ defines a geometric $\ell$-layer drawing, then there does not exist a geometric $\ell$-layer drawing $(\Gamma_G, \chi_G)$ of $G$ that extends $(\Gamma_H, \chi_H)$.

If a tuple $(\widetilde{\Gamma_G}, \widetilde{\chi_G})$ defines a geometric $\ell$-layer drawing of $\widetilde{G}$, we make the following observation.
Any edge $e \in E(G) \setminus E(\widetilde{G})$ has, by the definition of $\widetilde{G}$, $\Size{C(e)} \geq k$.
Thus, there are at least~$k$ colors available that prevent for $e$ a monochromatic crossing with edges from $H$.
Since $\Size{E(\widetilde{G} \setminus E(H)} < k$, there must exist a color $c \in C(e)$ such that every edge $\widetilde{e} \in E(\widetilde{G})$ that crosses $e$ in $\Gamma_G$ has $\widetilde{\chi_G}(\widetilde{e}) \neq c$.
As we perform the above steps at most $k$ times, each time for a different edge $e \in E(G) \setminus E(\widetilde{G})$ with $\Size{C(e)} \geq k$, we are guaranteed to find for every such~$e$ a color that prevents monochromatic crossings with the already drawn graph.
Hence, we will eventually obtain a geometric $\ell$-layer drawing $(\Gamma_G, \chi_G)$ of $G$ that extends $(\Gamma_H, \chi_H)$, i.e., a solution.
Finally, recall that overall, we have \BigO{k^k} branches to consider, and observe that the steps in each branch run in time linear in the input size.
Therefore, we conclude that the problem is \FPT{} when parameterized by $k$.
\qed\end{proof}

} %

\toappendix{
    
        \section{Details of the \W1-Hardness Proof}
        \label{section:w1_appendix}
    
}

\toappendix{

\begin{figure}[t]
    \centering
    \includegraphics[page=9,width=\linewidth]{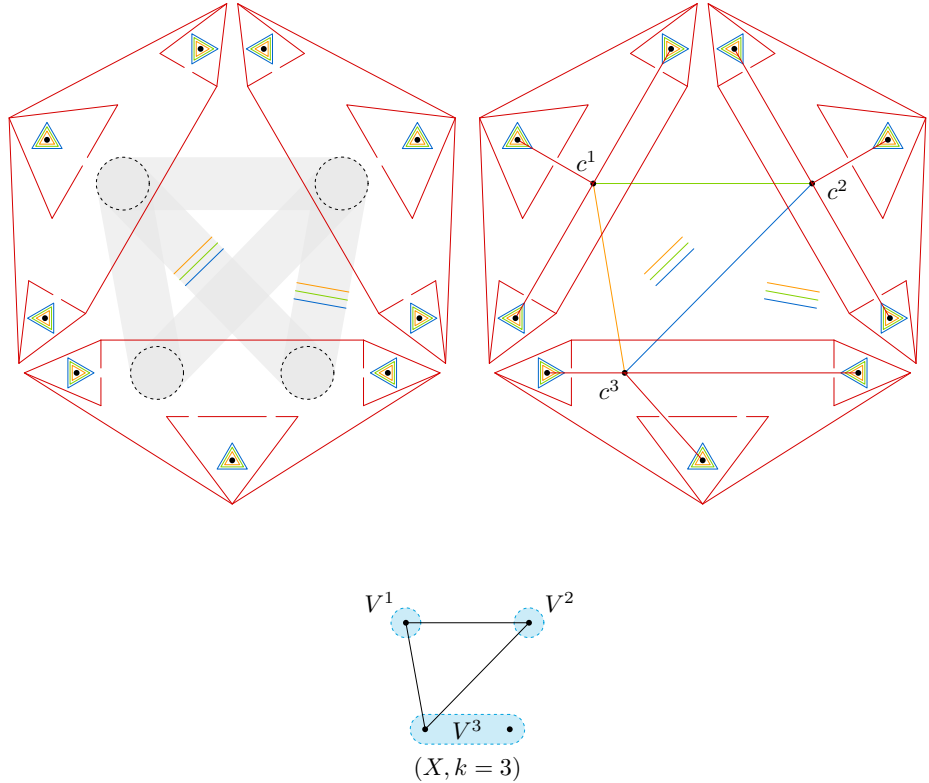}
    \vspace{-0.2cm}\caption{An instance of \textsc{Multicolored Clique} (bottom), the resulting instance of \textsc{GTE} (with the ``tunnels of visibility'' marked in gray), and a witness showing we have a positive instance (right).}
    \label{figure:w1_reduction_example_short}
\end{figure}

\subsection{Formalizing the Reduction}\label{section:formalizing_reduction}
Given an instance of $(X, k)$ of \textsc{Multicolored Clique}, our reduction is as follows:
For each $i \in [k]$, we assume $V^i$ to be ordered as $V^i = \Set{x_1^i, x_2^i, \dots, x_{|V^i|}^i}$.
We also assume that $k \geq 3$, for otherwise, we can trivially decide the instance and map to some trivial positive/negative instance of \textsc{Geometric Thickness Extension}. 

We construct an instance $(H, G, (\Gamma_H, \chi_H), \ell )$ of \textsc{Geometric Thickness Extension} with parameter $k' = 4k + \binom{k}{2}$ and number of colors $\ell = \binom{k}{2} + 1$.
Here, $(\Gamma_H, \chi_H)$ is a \drawing{\ell} of $H$ that needs to be extended to a \drawing{\ell} of $G$. 
The graph $G$ uses the vertex set $V(G) = C \cupdot A \cupdot T_l \cupdot T_d \cupdot T_r \cupdot \bl \cupdot \bg$.
    Here, $C$ denotes the \emph{clique vertices}, $A$ denotes the \emph{anchor} vertices, $T_l, T_d, T_r$ denote the \emph{left/down/right triangle vertices}, $\bl$ denotes the vertex set of the \emph{local blocking edges}, and $\bg$ denotes the vertex set of the \emph{global blocking edges}.
    The clique vertices $C = \Set{c^1, c^2, \dots, c^k}$ form a clique in $G$, and each $c^i \in C$ is connected to three unique anchor vertices $\Set{a_l^i, a_d^i, a_r^i} \subseteq A$.
    We will define the contents of the remaining vertex sets and their associated edges as soon as they are required.
    The subgraph $H$ of $G$ is set to $G$ without the clique vertices, that is, 
    $H \coloneqq G[V(G) \setminus C]$.

        To define $(\Gamma_H, \chi_H)$, we proceed as follows:
        First, we describe how to draw a single choice gadget. The construction will depend on a scaling factor $s$. Then, we describe how to compose $k$ choice gadgets into a single drawing, again with respect to the scaling factor $s$.
        Next, we introduce the so called global blocking edges and find a suitable scaling factor $s_0$ along the way.
        To complete the drawing, we introduce the so called local blocking edges.
        Alongside, we will also define the coloring $\chi_H$, argue why the drawing is free of monochromatic crossings, and collect useful observations for the correctness proof later on.

\begin{figure}[t]
    \centering
    \includegraphics[page=1,width=\linewidth]{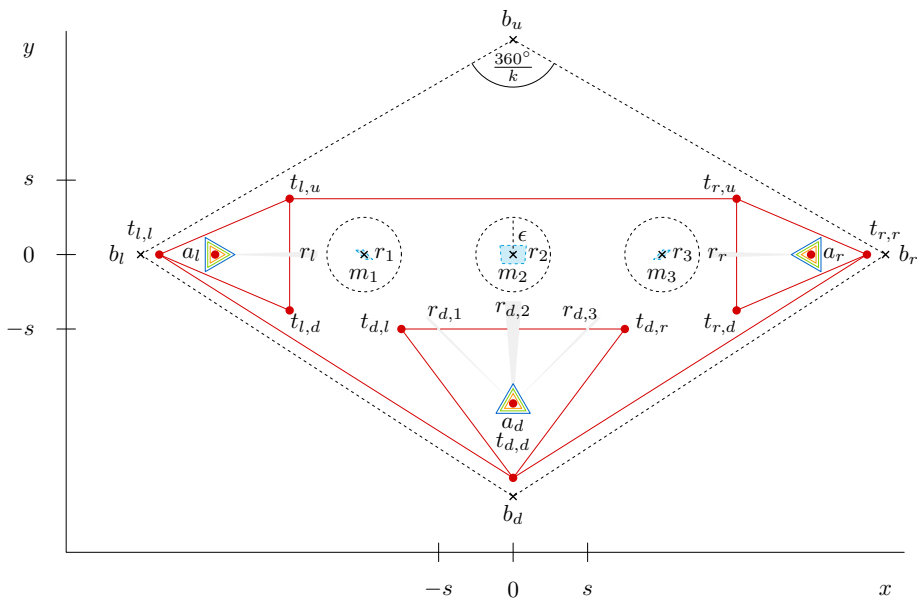}
    \vspace{-0.2cm}\caption{A choice gadget with $k = 3, |V^i| = 3, \epsilon = \frac{1}{2}, s = 1$. The superscript $ { \hspace{0pt} }^i$ is dropped for readability. For illustrative purposes, the local blocking edges of $a_l, a_d, a_r$ are drawn as well. These edges force the edges from $a_l, a_d, a_r$ to the clique vertex to be drawn in red and are discussed in \cref{section:drawing_local_blocking_edges}. See \cref{section:precise_construction_choice_gadgets} for a definition of all objects not described in this subsection.}
    \label{figure:choice_gadget}
\end{figure}

        First, we define a family of drawings $(\Gamma_s)_{s \in \mathbb{R}_{\geq 1}}$.
        Here, $s$ is a scaling factor greater than or equal to one. Later, we will find $s_0$ such that $\Gamma_s$ has desirable properties and set $\Gamma_H \coloneqq \Gamma_{s_0}$.

        Let $s \in \mathbb{R}_{\geq 1}$.
        We now describe how to obtain $\left.\Gamma_s\right|_{V(H) \setminus (\bg \cup \bl)}$, that is, $\Gamma_s$ where the local and global blocking edges are yet to be drawn.

        In what follows, we use the term \emph{red} to refer to color $0$, and define $\epsilon = \min \left\{\frac{3}{2|V^i|} \mid i \in [k]\right\}$.
        Furthermore, we use the notation $\disk{x}{r}$ do denote a closed disk in $\mathbb{R}^2$ with center $x$ and radius $r$, and write $\convexhull(\dots)$ to denote the convex hull of a set of sets or points.
        
        \subsection{Drawing a Single Choice Gadget}    \label{section:drawing_single_choice_gadget}

        We will now describe how to construct the $i$'th choice gadget and derive two  properties a choice gadget guarantees. 

        The purpose of a choice gadget is to constrain the possible position of $c^i$ to one of $|V^i|$ disjoint regions, call them $r^i_1, \dots, r^i_{|V^i|}$, each confined to a disk with radius $\epsilon$ with centers $m^i_1, \dots, m^i_{|V^i|}$ along a line segment. Note that $\epsilon>0$ can in principle be chosen to be arbitrarily small, but the concrete value we fixed above suffices for our purposes.

        Consider \cref{figure:choice_gadget} where a choice gadget is shown.
        A choice gadget is a partially drawn graph.
        This graph contains three \emph{anchor vertices}: $a^i_l$ (left), $a^i_d$ (down), $a^i_r$ (right), which are all adjacent to the clique vertex $c^i$.
        The only vertex not yet drawn is $c^i$.
        Around each anchor vertex, we closely draw $\binom{k}{2}$ triangles in all colors except red (the local blocking edges), which forces that the edges to the clique vertex $c^i$ be drawn in red. Furthermore, around each anchor vertex, we position a red triangle (the left, down, and right triangle) with specific holes in the boundary.
        Each hole induces a cone where $c^i$ can be drawn crossing-free.
        The precise coordinates of all vertices are chosen such that the intersection of all cones gives the set of disjoint regions $r^i_1, \dots, r^i_{|V^i|}$ as described above.

        We describe the precise construction including concrete positions for all vertices in \cref{section:precise_construction_choice_gadgets}.

Proceeding onward, we do not need to keep in mind the precise internal geometry of a choice gadget, but rather it suffices that the construction satisfies the following two properties:
\begin{restatable}{lemma}{regionsindisksanddisksdisjoint}
  \label{lemma:regions_in_disks_and_disks_disjoint}
  Each region $\g{r_j^i}$ is contained in a disk with center $\g{m_j^i}$ and radius $\epsilon$, and all such disks are disjoint.
\end{restatable}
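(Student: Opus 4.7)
The plan is to prove the lemma by direct geometric analysis of the choice gadget. Recall that each region $\g{r_j^i}$ arises as the intersection of three cones: for each anchor $a \in \{a_l^i, a_d^i, a_r^i\}$, the relevant cone is the angular sector at $\Gamma_H(a)$ bounded by the two rays through the two endpoints of the $j$'th hole on the red triangle surrounding $a$. I would start by recalling the explicit coordinates of all anchors and triangle vertices from Section "Precise Construction of Choice Gadgets", and take $\g{m_j^i}$ to be the canonical target position for $c^i$ under the choice $x_j^i \in V^i$ (concretely, the point on the horizontal line of candidate placements that corresponds to the $j$'th hole on the down-triangle).

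The core of the argument is a diameter bound: I would show that $\g{r_j^i} \subseteq \disk{\g{m_j^i}}{\epsilon}$ by computing the extreme vertices of the convex polygon cut out by the three cones. Using elementary trigonometry, the cones from $a_l^i$ and $a_r^i$ (positioned off to the sides) restrict $\g{r_j^i}$ to a narrow horizontal strip around $\g{m_j^i}$, while the cone from $a_d^i$ (positioned below) restricts it to a narrow vertical strip. Intersecting these strips and reading off the four corner vertices in closed form lets me bound the distance from each corner to $\g{m_j^i}$ by $\epsilon$, using the concrete lengths and angles fixed in the construction. The scaling factor $s \geq 1$ acts uniformly on all positions as well as on the disks, so the containment is independent of~$s$.

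For the disjointness of the disks, I would observe that the centers $\g{m_1^i}, \dots, \g{m_{|V^i|}^i}$ lie equidistantly on a common horizontal segment with a spacing $\delta_i$ determined by the construction. Substituting the choice $\epsilon \leq \frac{3}{2|V^i|}$ together with the explicit value of $\delta_i$ coming from the hole spacing on the down-triangle, I would verify that $\delta_i > 2\epsilon$, which yields pairwise disjointness within a single choice gadget. Disjointness across different gadgets $i \neq i'$ then follows from the fact that the $k$ gadgets are arranged around a regular $k$-gon and are spatially separated by a wide margin (scaled up by $s$), so their entire bounding regions, and in particular the $\epsilon$-disks, are disjoint.

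The main obstacle is the trigonometric bookkeeping in the diameter bound: one must carefully relate the angular width of each of the three cones (which is governed by the hole widths on the respective triangles) to the distance from the corresponding anchor to the line of centers, and then verify that the three cones' intersection is simultaneously narrow in both coordinate directions near $\g{m_j^i}$. The positions of the hole endpoints in the construction are tuned precisely so that this bound goes through with the stated value of~$\epsilon$, and the fact that we may choose $\epsilon$ arbitrarily small in principle (while the construction scales accordingly) means there is always some slack if needed.
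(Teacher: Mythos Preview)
Your proposal is essentially the paper's own argument: the paper also bounds $r_j^i$ by first observing that $r_l^i \cap r_r^i$ confines the region to the horizontal strip $|y| \le \epsilon/4$, then intersects with the cone $r_{d,j}^i$ to obtain a parallelogram whose axis-aligned bounding box has height $\epsilon/2$ and width at most $\epsilon/\sqrt{2}$ (this is precisely where the specific choice of $\zeta_j^i$ is used), and concludes containment in $\disk{m_j^i}{\epsilon}$; disjointness is handled exactly as you describe, via the spacing of the $m_j^i$ within a gadget and the non-overlap of distinct gadgets. One small correction: the scaling by $s$ does \emph{not} act on the disk radii or on the $\epsilon$-scale barrier geometry, only on the anchor and midpoint positions, so containment is $s$-independent for a different reason than the one you state---but this does not affect the validity of your plan.
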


\begin{proof}
Let $i \in [k]$ and $j \in [|V^i|]$.
For ease of notation, we work in the coordinate space of choice gadget $i$ and drop the accent $\g{\;}$.
We observe that since $a_l^i \cap a_r^i$ is bounded vertically by horizontal lines at $y = \pm \sfrac{\epsilon}{4}$, $r_l^i \cap r_r^i \cap r_{d,j}^i = r_j^i$ is bounded by a parallelogram (\cref{figure:regions_in_disks_and_disks_disjoint}, marked with red outline). We consider the axis-aligned bounding box of said parallelogram. The height of the bounding box is $\epsilon/2 < \sfrac{\epsilon}{\sqrt{2}}$. 
Solving for the width of the bounding box to be less than $\sfrac{\epsilon}{\sqrt{2}}$ (using that $\tan(\alpha_j^i) \ge 1$ as per our construction) yields $\zeta^i_j$ be upper-bounded exactly as we have defined $\zeta^i_j$ to be.

Using that $\sfrac{\epsilon}{\sqrt{2}}$ upper-bounds both the width and height of the bounding box, it is easy to see that no matter where $m_j^i$ is located in relation to the bounding box, the box is always contained in the disk $\disk{m_j^i}{\epsilon}$.

The second part of the lemma holds because inside of a single choice gadget, the disks are disjoint by choice of $\epsilon$, and globally, the choice gadgets do not overlap.
\qed\end{proof}

\begin{figure}[t]
    \centering
    \includegraphics[page=4]{figures}
    \vspace{-0.2cm}\caption{ Illustration for \cref{lemma:regions_in_disks_and_disks_disjoint}.   }
    \label{figure:regions_in_disks_and_disks_disjoint}
\end{figure}

\begin{restatable}{lemma}{anchorsegments}
  \label{lemma:anchor_segments}
  Let $i \in [k]$. If $(\Gamma_H, \chi_H)$ is extended with the clique vertex $c^i$ such that its edges to the anchor vertices $a_l^i, a_d^i, a_r^i$ are drawn in red, then $c^i$ is drawn in the region $\g{r_1^i} \cup \dots \cup \g{r_{|V^i|}^i}$.
\end{restatable}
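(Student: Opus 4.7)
The plan is to argue separately, for each of the three anchor vertices $a_l^i, a_d^i, a_r^i$, that the red edge from the anchor to $c^i$ forces $c^i$ into a prescribed union of cones, and then to observe that the intersection of these three unions is exactly $\widetilde{r_1^i}\cup\dots\cup\widetilde{r_{|V^i|}^i}$ by construction of the choice gadget.

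Fix an anchor, say $a_l^i$, and recall that in $(\Gamma_H,\chi_H)$ we have drawn a red triangle around $a_l^i$ whose boundary consists of three straight-line segments interrupted only by finitely many small ``holes''. Since the edge $c^i a_l^i$ is red by assumption, it cannot cross any red edge on the triangle's boundary without producing a monochromatic crossing; hence, tracing the segment from $a_l^i$ outward, it must exit the triangle through one of the holes. Equivalently, the ray from $a_l^i$ through $\Gamma_H(c^i)$ must enter one of the cones with apex $a_l^i$ spanned by the endpoints of a hole, so $\Gamma_H(c^i)$ must lie in the union $R_l^i$ of these (open) cones emanating from $a_l^i$. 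The same reasoning applied to $a_d^i$ and $a_r^i$ yields analogous unions $R_d^i$ and $R_r^i$. We conclude $\Gamma_H(c^i)\in R_l^i\cap R_d^i\cap R_r^i$.

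Finally, the precise placement of the anchors and the positions of the holes on the three triangles were chosen in \cref{section:precise_construction_choice_gadgets} so that $R_l^i \cap R_d^i \cap R_r^i = \widetilde{r_1^i}\cup\dots\cup\widetilde{r_{|V^i|}^i}$: the cones from $a_l^i$ and $a_r^i$ together pin down a narrow horizontal strip (bounded by the lines $y=\pm\sfrac{\epsilon}{4}$ in gadget-local coordinates, cf.\ the proof of \cref{lemma:regions_in_disks_and_disks_disjoint}), while each cone from $a_d^i$ singles out one of the $|V^i|$ vertical strips, and the $j$-th piece of the resulting intersection is exactly $\widetilde{r_j^i}$. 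Combining this with the above inclusion proves the claim.

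The main obstacle is purely bookkeeping: one has to verify that the only way a red straight-line segment from an anchor can avoid every red boundary segment of the surrounding triangle is to pass through one of the designated holes, which requires that the holes and the triangle sides are arranged so the gaps are geometrically realized as cones with apex at the anchor. This is built into the explicit coordinates of the choice gadget, so once those coordinates are in hand, the argument reduces to the visibility observation above together with the identity $R_l^i\cap R_d^i\cap R_r^i = \bigcup_{j}\widetilde{r_j^i}$ that is essentially the defining property of the gadget. \qed
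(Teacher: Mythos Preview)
Your approach matches the paper's: for each anchor, the red edge to $c^i$ must avoid the red triangle surrounding that anchor, which confines $c^i$ to a union of cones; intersecting the three constraints gives $\bigcup_j \widetilde{r_j^i}$ by construction.

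There is one small gap. When you write ``tracing the segment from $a_l^i$ outward, it must exit the triangle through one of the holes'', you are implicitly assuming that $c^i$ lies \emph{outside} the left triangle. If $c^i$ were drawn in the interior of the left triangle, the segment $c^ia_l^i$ would not exit at all and would cross no red edge of that triangle, so the left-anchor constraint alone does not rule this out; your set $R_l^i$ should therefore also contain the triangle's interior. The paper makes exactly this point: the region allowed by the left anchor is (left triangle) $\cup\ \widetilde{r_l^i}$, and similarly for the other two anchors. The paper then disposes of the extra triangle-interior pieces in one line by observing that the three triangles are pairwise disjoint (so no point can simultaneously be inside one triangle and satisfy the constraints coming from the other two anchors), leaving precisely $\widetilde{r_l^i}\cap\widetilde{r_r^i}\cap\bigl(\bigcup_j \widetilde{r_{d,j}^i}\bigr)=\bigcup_j \widetilde{r_j^i}$. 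Adding this observation closes the gap in your argument.
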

\begin{proof}
Let $l_l, l_d, l_r$ be line segments starting at a common point that end at the anchor points $\g{a_l^i}$, $\g{a_d^i}$, $\g{a_r^i}$ respectively.
It suffices to show that then all three segments do not cross a red segment in $(\Gamma_H, \chi_H)$
if and only if the common starting point lies in the region $\g{r_1^i} \cup \dots \cup \g{r_{|V^i|}^i}$.

Consider again \cref{figure:choice_gadget}.
It is clear that a red segment is not crossed if and only if the common endpoint lies either inside the union of the left triangle and $\g{r_l^i}$,
and in the union of the right triangle and $\g{r_r^i}$,
and in the union of the downward triangle and $\g{r_{d,1}^i} \cup \dots \cup \g{r_{d,|V^i|}^i}$.
Since the left, right, and downward triangle are disjoint, the statement follows.      
\qed\end{proof}

\subsection{Precise Construction of Choice Gadgets}
\label{section:precise_construction_choice_gadgets}
        We now describe precisely how to construct the $i$'th choice gadget.
        First, we define the remaining vertices and edges required to do so:
        \begin{itemize}
            \item $T_l = \bigcup_{i \in [k]} T_l^i$, where $T_l^i = \Set{t_{l,l}^i, t_{l,u}^i, t_{l,d}^i, t_{l,b,1}^i, t_{l,b,2}^i}$, the \emph{left triangle vertices}, where $l, r, u, d, b$ denote left, right, up, down, and boundary respectively,
        \item $T_r = \bigcup_{i \in [k]} T_r^i$, where $ T_r^i = \Set{t_{r,l}^i, t_{r,u}^i, t_{r,d}^i, t_{r,b,1}^i, t_{r,b,2}^i}$, the \emph{right triangle vertices}, where $b$ stands for \emph{barrier}, and, 
        \item $T_d = \bigcup_{i \in [k]} T_d^i$, where $T_d^i = \Set{t_{d,l}^i, t_{d,u}^i, t_{d,d}^i} \cup \Set{t_{l,b,2j-1}^i, t_{l,b,2j}^i \mid j \in [|V^i|] }$, the \emph{downward triangle vertices}.
        \end{itemize}

For each $i \in [k]$,
        \begin{itemize}
            \item vertices $t_{l,b,1}^i, t_{l,u}^i, t_{l,l}^i, t_{l,d}^i, t_{l,b,2}^i$ form a path,
        \item vertices $t_{r,b,1}^i, t_{r,u}^i, t_{r,r}^i, t_{r,d}^i, t_{r,b,2}^i$ form a path,
        \item vertices $t_{d, b, 1}^i, t_{d, l}^i, t_{d, d}^i, t_{d, r}^i, t_{d, b, 2|V^i| }^i$ form a path, and finally,
        \item for each $j \in [|V^i| - 1]$, vertices $t_{d, 2j}^i$ and $t_{d, 2j+1}^i$ form an edge.
        \end{itemize}

        Refer to \cref{figure:choice_gadget} for the positions of all of the gadget's vertices which are marked with a red disk. The positions are determined relative to the gadgets center point $(0, 0)$.
        More precisely, we position $t_{l, u}$ at $s \cdot (-3, \sfrac{3}{4}), t_{l, d}$ at $s \cdot (-3, -\sfrac{3}{4}), t_{l, l} $ at $ s \cdot (-4 \sfrac{3}{4}, 0), t_{d, d}$ at $ s \cdot (0,-3), t_{d,l} $ at $ s  \cdot (-1 \sfrac{1}{2}, -1), t_{d,r} $ at $ s  \cdot (1 \sfrac{1}{2}, -1), t_{r, u} $ at $ s \cdot (3, \sfrac{3}{4}), t_{r, d}$ at $ s \cdot (3, -\sfrac{3}{4}), t_{r, r} $ at $ s \cdot (4 \sfrac{3}{4}, 0), b_l $ at $ s \cdot (-5, 0), b_r $ at $ s \cdot (5, 0),$ and $b_d $ at $ s \cdot (0, -3 \sfrac{1}{4})$.
        The position of $b_u$ is determined by
        ensuring the angle $\angle b_l b_u b_r$ is $\frac{360^{\circ}}{k}$ and $b_u$'s $y$-coordinate is positive.

        \begin{figure}[t]
            \centering
            \includegraphics[page=2]{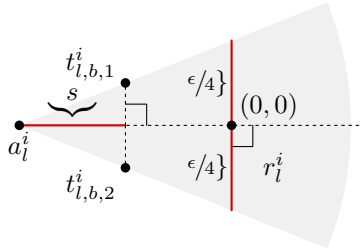}
            \vspace{-0.2cm}\caption{Diagram illustrating how to compute the positions of the barrier vertices of the left triangle.}
            \label{figure:choice_gadget_triangle_left}
        \end{figure}

                \begin{figure}[t]
            \centering
            \includegraphics[page=3]{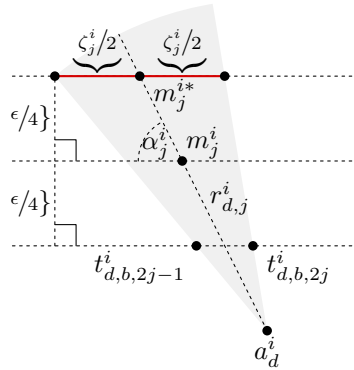}
            \vspace{-0.2cm}\caption{Diagram illustrating how to compute the positions of the barrier vertices of the downward triangle. }
            \label{figure:choice_gadget_triangle_down}
        \end{figure}

        Next, we define a series of points (marked with a black cross) in relation to the gadgets center point $(0, 0)$.
        \begin{itemize}
            \item The \emph{boundary points} $b_l^i, b_d^i, b_r^i, b_u^i$ (left, down, right, up). The positions of $b_l^i, b_d^i, b_r^i$ are defined by \cref{figure:choice_gadget}. The position of $b_u^i$ is obtained by ensuring $ \angle b_l^i b_u^i b_r^i = \sfrac{360^{\circ}}{k}$.
            \item The \emph{middle points} $m_1^i, m_2^i, \dots, m_{|V^i|}^i$ lie on the $x$-axis evenly spaced on the interval $[-2; 2]$. The $x$-coordinate of the $j$'th middle point is
            \begin{equation*}
                s \cdot \left( \frac{4(j-1)}{|V^i|-1}-2 \right)
            \end{equation*}
            if $|V^i| > 1$, otherwise the single middle point is mapped to the origin.            
        \end{itemize}

        Now, we are ready to define the positions of the barrier vertices, that is, the vertices of $T^i_l, T^i_d, T^i_r$ whose positions we have not fixed so far:

        We begin with the left triangle.
        The corresponding barrier vertices are $t_{l,b,1}^i, t_{l,b,2}^i$.
        The position of $t_{l,b,1}^i$ is given by the intersection of the line through $a_l^i$ and $(0, \sfrac{\epsilon}{4})$ and the vertical line through the point $a_l^i + (s, 0)$.
        The point $t_{l,b,2}^i$ is $t_{l,b,1}^i$ mirrored at the $x$-axis.
        See \cref{figure:choice_gadget_triangle_left} for an illustration.

        The barrier vertices of the right triangle are found symmetrically (refer to \cref{figure:choice_gadget}).

        Finally, we deal with the barrier vertices $\Set{t_{l,b,2j-1}^i, t_{l,b,2j}^i \mid j \in [|V^i|] }$ of the downward triangle.
        Let $j \in [|V^i|]$. Let $m_j^{*i}$ be the intersection of the line through $a_d^i$ and $m_j^i$ and the horizontal line through $(0, \sfrac{\epsilon}{4})$. Furthermore, let $\alpha_j^i \in (0, \sfrac{\pi}{2}]$ be the angle at which these two lines cross.
        The number $\zeta^i_j$ is set to a nonzero value less than
        \begin{equation*}
            \frac{-\epsilon + \sqrt{2}\epsilon \tan(\alpha_j^i)}{2 \tan(\alpha_j^i) }.
        \end{equation*}
        Now consider the line through $m_j^{i*} - (\sfrac{\zeta^{i}_j}{2}, 0)$ and $a_d^i$. The intersection of this line with the horizontal line through $(0, -\sfrac{\epsilon}{4})$ gives the position of $t_{l,b,2j-1}^i$.
        Symmetrically, the intersection of this horizontal line and the line through $m_j^{i*} + (\sfrac{\zeta^{i}_j}{2}, 0)$ and $a_d^i$ gives the position of $t_{l,b,2j}^i$.
        In total, this fixes the position of all boundary vertices.
        See \cref{figure:choice_gadget_triangle_down} for an illustration.

        Observe that all boundary vertices are ordered strictly monotone with respect to the line segment pointing to the origin of the respective triangle. Hence, by \cref{figure:choice_gadget}, no edges cross in a single choice gadget.

        Note that for $k > 3$, the $y$-coordinate of $b_u$ is bigger than for $k = 3$.
        Hence we observe, again using \cref{figure:choice_gadget}, that all vertices and edges of the choice gadget are contained in the convex hull of $b_u^i, b_l^i, b_r^i, b_d^i$.
        
        Next, we define a number of regions in the plane in relation to the edge gadgets positions:
        \begin{itemize}
            \item The infinite cone with corner point $a_l^i$ and boundary through $t_{l,b,1}^i$ and $t_{l,b,2}^i$ defines the region $r_l^i$.
            \item Symmetrically, the infinite cone with corner point $a_r^i$ and boundary through $t_{r,b,1}^i$ and $t_{r,b,2}^i$ defines the region $r_r^i$.
            \item For each $j \in [|V^i|]$, the infinite cone with corner point $a_d^i$ and boundary through $t_{d, b,2 j - 1}^i$ and $t_{d, b,2j}^i$ defines the region $r_{d,j}^i$.
            \item Finally, for each $j \in [|V^i|]$, the region $r_j^i$ is given by $r_l^i \cap r_r^i \cap r_{d,j}^i$.
        \end{itemize}
        See \cref{figure:choice_gadget} for an example of all defined regions, \cref{figure:choice_gadget_triangle_left} for the region $r_l^i$, and \cref{figure:choice_gadget_triangle_down} for the regions $r_{d,j}^i$.
        
        \subsection{Composing the Choice Gadgets}
        \label{section:composing_choice_gadgets}

Consider the \emph{boundary points} $b_l^i, b_d^i, b_r^i, b_u^i$ (left, down, right, up) (\cref{figure:choice_gadget}) of a choice gadget.
The choice gadget is contained entirely within the convex hull of its boundary points.

We now describe how to combine all $k$ choice gadgets into a single drawing:
        Let $P$ be a regular $k$-gon and side length $10s$, which equals the distance between the left and right boundary point of each choice gadget.
        For each of the $k$ choice gadgets, pick one of the $k$ sides of $P$.
        We apply a rigid transformation to the $i$'th choice gadget such that the upper boundary $b_u^i$ of the choice gadget is mapped to the center of $P$, the left boundary $b_l^i$ is mapped to one endpoint of the corresponding side of $P$, and that the right boundary $b_r^i$ is mapped to the other endpoint of the corresponding side of $P$.
        
        We transform all vertex positions, points, and regions for each choice gadget this way and denote the result with the accent $\g{\;}$. For example, $\g{b_u^i} = (0, 0)$ for all $i \in [k]$.
        
        Finally, we set $\left.\Gamma_s\right|_{V(H) \setminus (\bg \cup \bl)}$ such that it agrees with the positions we just computed. That is,
        for all $v \in V(G) \setminus \Set{\bg \cup \bl}$, we have $\Gamma_s(v) \coloneqq \g{v}$.

        Note that each choice gadget is crossing-free, and we obtained our composite drawing by combining choice gadget drawings in a non-overlapping way. Hence, $\left.\Gamma_s\right|_{V(H) \setminus (\bg \cup \bl)}$ is crossing free.

        We color all edges drawn so far (i.e. those not incident to a vertex in $\bl \cup \bg$) in red.
        It remains to draw the vertices $\bg$ and $\bl$.

        \subsection{Drawing the Global Blocking Edges}
        \label{section:drawing_global_blocking_edges}

        \begin{figure}[t]
            \centering
            \includegraphics[page=6]{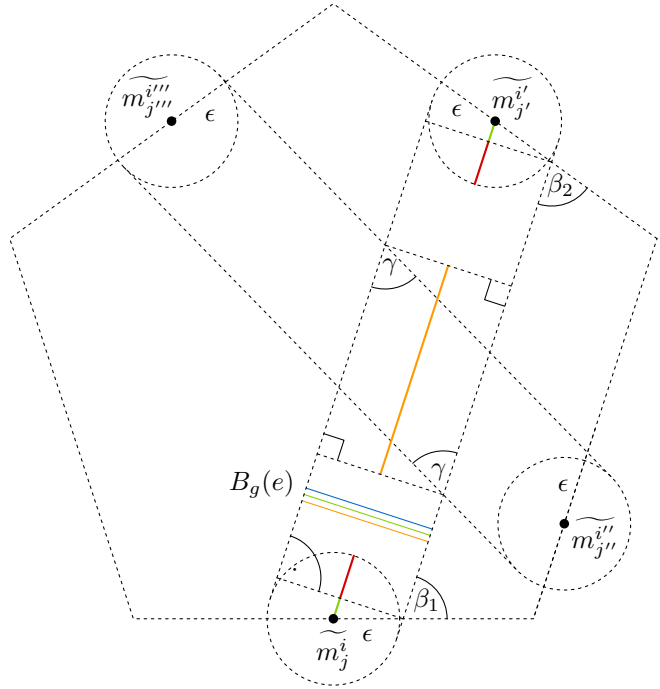}
            \vspace{-0.2cm}\caption{Illustration of the various forbidden line segments and a possible placement of a set of global blocking edges.}
            \label{figure:pentagon}
        \end{figure}

        Next, we describe how to draw the global blocking edges to to obtain $\left.\Gamma_s\right|_{V(H) \setminus \bl}$, that is, $\Gamma_H$ where only the local blocking edges are missing.
        The vertex set $\bg$ of the global blocking edges is composed of
        a set of vertices ${\bg}(e)$ for each non-edge $e$ in $X$, and ${\bg}(e)$ induces a matching of $\binom{k}{2}$ edges (one for each color except red), which we call the \emph{global blocking edges} (of $e$).

        The goal of this section is to draw the global blocking edges such
        that two regions from two choice gadgets that select a vertex of $X$ each can be connected via an edge if and only if the corresponding edge in $X$ exists:
        \begin{lemma}\label{lemma:global_blocking}
            Let $i, i' \in [k]$ be distinct and $j \in [|X_i|], j' \in [|X_{i'}|]$ and consider $\left.\Gamma_s\right|_{V(H) \setminus \bl}$.
            A line segment starting in $\g{r_j^i}$ and ending in $\g{r_{j'}^{i'}}$
            crosses all global blocking edges of $x_j^ix_{j'}^{i'}$ if $x_j^ix_{j'}^{i'} \not\in E(X)$, and crosses no global blocking edges otherwise.
        \end{lemma}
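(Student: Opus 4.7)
The plan is as follows. By \cref{lemma:regions_in_disks_and_disks_disjoint}, each region $\g{r_j^i}$ is contained in a closed disk $D_j^i \coloneqq \disk{\g{m_j^i}}{\epsilon}$, and these disks are pairwise disjoint. Hence every line segment with one endpoint in $\g{r_j^i}$ and the other in $\g{r_{j'}^{i'}}$ lies entirely inside the \emph{tunnel} $T_{(i,j),(i',j')} \coloneqq \convexhull(D_j^i \cup D_{j'}^{i'})$, a stadium-shaped region of width $2\epsilon$. Conversely, any segment cutting such a tunnel transversely (entering one long side and leaving the other) intersects every such visibility segment.

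For each non-edge $e = x_j^i x_{j'}^{i'}$ of $X$, I place $\binom{k}{2}$ global blocking edges, one per non-red color, as short parallel transverse segments near the midpoint of $T_{(i,j),(i',j')}$; for each edge of $X$ no blocking edges are placed into the corresponding tunnel. The first clause is then immediate, since any visibility segment of a non-edge stays inside its tunnel and therefore crosses all $\binom{k}{2}$ associated blocking edges. The second clause will follow once I argue that (a) no blocking edge inserted for some non-edge ends up inside the tunnel of an edge of $X$, and (b) the blocking edges themselves form a crossing-free drawing in each color class.

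Property (a) relies on the scale $s$. Each tunnel has width $2\epsilon$ independent of $s$, but its length is $\Theta(s)$ since the $k$ choice gadgets sit on a regular $k$-gon of side length $10s$. The minimum angle between any two tunnels is bounded below by a positive constant depending only on $k$, so the diameter of the pairwise intersection of two distinct tunnels is bounded by a function of $k$ and $\epsilon$ that is independent of $s$. An explicit trigonometric computation then yields a threshold $s_0$ depending only on $k$, $\epsilon$, and $\max_i |V^i|$ such that for every $s \geq s_0$, each non-edge tunnel contains a central sub-interval disjoint from every edge tunnel. Placing the blocking edges of a non-edge inside its own central sub-interval automatically realises (a).

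The main obstacle is (b): two blocking edges sharing a color belong to different non-edges and can only interact inside the intersection of their two tunnels. Each non-edge tunnel provides $\Theta(s/\epsilon)$ pairwise-disjoint candidate slots along its length, whereas the number of tunnel pairs is polynomial in $k$ and $\max_i |V^i|$. After possibly enlarging $s_0$, a greedy procedure assigns the $\binom{k}{2}$ blocking edges of each non-edge to slots separated from every foreign tunnel's intersection region, so that no two same-colored blocking edges cross. Finally, the red edges of each choice gadget are confined to the convex hull of its four boundary points, which lies well outside the central sub-intervals used for the blocking edges; hence no red edge in $\left.\Gamma_s\right|_{V(H) \setminus \bl}$ is ever crossed monochromatically by a global blocking edge. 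This completes the construction and verifies both clauses of the lemma.
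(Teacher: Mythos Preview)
Your approach is essentially the paper's: use \cref{lemma:regions_in_disks_and_disks_disjoint} to confine each visibility segment to the stadium $\convexhull(D_j^i \cup D_{j'}^{i'})$, place the blocking edges of each non-edge as full-width transverse chords of their stadium, and scale until every non-edge's central segment has a slot free of all other stadia. One small slip: the minimum angle between two tunnel axes is \emph{not} bounded in terms of $k$ alone---two tunnels from the same midpoint to adjacent midpoints of another gadget subtend an angle of order $1/\max_i|V^i|$---so the intermediate claim is wrong even though your final $s_0$ dependence (on $k$, $\epsilon$, $\max_i|V^i|$) is right. The paper sidesteps this by computing, for each non-edge $e$, an explicit total forbidden length $f_e$ along its central segment (with a contribution $\tan(\pi/2-\gamma)+1/\sin\gamma$ per crossing tunnel at angle $\gamma$, plus terms for the end disks and for staying inside the $k$-gon), observes $f_e$ is independent of $s$, and sets $s_0=\max_e (f_e+1)/l_e$; your greedy slot argument is a less explicit version of the same calculation.
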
       

        Let $e = x_j^i x_{j'}^{i'} \in \overline{E(X)}$ and see \cref{figure:pentagon} for an illustration.
        We draw the global blocking edges of $e$ (with vertex set ${\bg}(e)$)
        in all colors except red\footnote{We exclude red as it is already blocked by the choice gadget construction.} orthogonally through the ``the tunnel of visibility''  between the corresponding regions ($\g{r^{i}_{j}}~$and~$\g{r^{i'}_{j'}}$).

        To do so, we need to find ``free space'' where we can insert the edges safely, i.e. without introducing any crossings with the rest of the drawing, nor global blocking edges of other non-edges of $X$.
        The strategy to do so is as follows:
        We search for a suitable subsegment of the line segment from $\g{m^i_j}$ to $\g{m^{i'}_{j'}}$ by marking all subsegments where we cannot draw the edges safely (of which there are several kinds).
        It may happen that the entire line segment turns out to be unsuitable for placement. To combat this, we carefully compute the scaling factor $s_0$ such that we can always find a safe subsegment.

        See \cref{section:details_global_blocking_edges} for the details of this construction and the proof of 
        \cref{lemma:global_blocking}.

\subsection{Details of Inserting Global Blocking Edges}
\label{section:details_global_blocking_edges}
        To ensure \cref{lemma:global_blocking} holds, it suffices to
        draw the global blocking edges of $e = x_j^i x_{j'}^{i'} \in \overline{E(X)}$ in all colors except red perpendicular to the line segment from $\g{m^i_j}$ to $\g{m^{i'}_{j'}}$ such that the endpoints lie on the boundary of the convex hull formed by the two disks with centers $\g{m^i_j}$ and $\g{m^{i'}_{j'}}$ with radius $\epsilon$,
        that the drawn edges do not intersect with either disk, that all edges are drawn inside $P$, that we do not introduce any crossings (ignoring colors), and that for each edge of $X$, the drawn blocking edges do not intersect the convex hull of the two corresponding disks.
        
        To achieve this, for each non-edge $e$, 
        we compute a series of \emph{forbidden line segments}, which are sub-line segments of the line segment from $\g{m^i_j}$ to $\g{m^{i'}_{j'}}$.
        Drawing all edges of the matching such that they do not cross such a forbidden line segment will ensure that all properties listed above are achieved.

        For each non-edge $e$, we use the following forbidden line segments and use $f_e \in \mathbb{R}$ to keep track of their total length.
        \begin{itemize}
            \item A segment starting at $\g{m^i_j}$ of length $\epsilon$, and symmetrically, a segment starting at $\g{m^{i'}_{j'}}$ of length $\epsilon$.
            These segments ensure that the edges of the matching will not intersect with the two disks. These segments are marked in bold red in \cref{figure:pentagon} (Note that part of the segments is overshadowed by green forbidden segments defined next).
            Their contribution to $f_e$ is $2 \epsilon$.
            \item A segment starting at $\g{m^i_j}$ of length such that the segment-endpoint lies on the boundary of the area-maximal rectangle that can be inscribed into the convex hull of $(\disk{\g{m^i_j}}{\epsilon}$ and $\disk{\g{m^{i'}_{j'}}}{\epsilon}$.
            Symmetrically, a line segment starting at $\g{m^{i'}_{j'}}$.
            These segments serve to enforce that the edges of the matching be drawn inside of $P$ and are marked in bold green in \cref{figure:pentagon}.
            Elementary trigonometry reveals that their contribution to $f_e$ is 
            \begin{equation*}
                \frac{\epsilon}{\tan(\beta_1)} + \frac{\epsilon}{\tan(\beta_2)},
            \end{equation*}
            where $\beta_1, \beta_2 \in (0, \sfrac{\pi}{2}]$ are the angles of the line segment from $\g{m^i_j}$ to $\g{m^{i'}_{j'}}$ and the polygon-side of gadget $i$ and gadget $i'$.
            \item For each line segment from $\g{m^{i''}_{j''}}$ to $\g{m^{i'''}_{j'''}}$ that crosses the line segment (or shares exactly one endpoint) from $\g{m^i_j}$ to $\g{m^{i'}_{j'}}$, we compute a forbidden segment as follows:
            First, compute the parallelogram
            \begin{multline*}
    \convexhull(\disk{\g{m^i_j}}{\epsilon}, \disk{\g{m^{i'}_{j'}}}{\epsilon}) \\
    \cap \\
    \convexhull(\disk{\g{m^{i''}_{j''}}}{\epsilon}, \disk{\g{m^{i'''}_{j'''}}}{\epsilon}).
\end{multline*}
            Then, project the parallelogram orthogonally onto the line trough $\g{m^i_j}$ to $\g{m^{i'}_{j'}}$ to obtain the forbidden segment.
            A forbidden segment of this kind serves to enforce that the global blocking edges corresponding to the current non-edge do not cross with the corresponding global blocking edges of another non-edge, and that the convex hulls corresponding to edges of $X$ remain free of global blocking edges.
            An example of such a segment is marked in bold orange in \cref{figure:pentagon}.
            Elementary trigonometry reveals that their contribution to $f_e$ is
            \begin{equation*}
                \tan( \frac{\pi}{2} - \gamma ) + \frac{1}{\sin(\gamma)},
            \end{equation*}
            where $\gamma \in (0, \pi]$ is the angle at which the two line segments connecting the midpoints $\g{m^i_j}, \g{m^{i'}_{j'}}$ and $\g{m^{i''}_{j''}}, \g{m^{i'''}_{j'''}}$ respectively cross.
        \end{itemize}

        To find a sub-segment where the current global blocking edges ${\bg}(e)$ can be embedded satisfying all conditions listed above, it suffices to subtract all forbidden segments from the line segment from $\g{m^i_j}$ to $\g{m^{i'}_{j'}}$ and to pick an arbitrary sub-segment.
        If the forbidden segments cover the whole segment from $\g{m^i_j}$ to $\g{m^{i'}_{j'}}$, we say $\Gamma_s$ is degenerate and $\Gamma_s$ is undefined.

        Observe that $f_e$ does not depend on $s$.
        Hence, we can select a scaling factor $s_0$ such that each line segment connecting any two midpoints is longer than $f_e$, yielding a non-degenerate $\Gamma_{s_0}$.
        We achieve this by setting
        \begin{equation*}
            s_0 \coloneqq \max_{e \in \overline{E(X)}} \frac{f_e + 1}{l_e}, 
        \end{equation*}
        where for each non-edge $e \in \overline{E(X)}$, $l_e$ is the length of the corresponding line segment connecting the corresponding midpoints.
        
        This construction and \cref{lemma:regions_in_disks_and_disks_disjoint} finally yields \cref{lemma:global_blocking}.

        \subsection{Drawing the Local Blocking Edges}
        \label{section:drawing_local_blocking_edges}

        Finally, we draw the local blocking edges and edges and obtain the completed drawing $(\Gamma_H = \Gamma_{s_0}, \chi_H)$.
        The vertex set $\bl$ of the local blocking edges is composed of
        a set of vertices ${\bl}(a)$ for each anchor vertex $a \in A$, 
        where ${\bl}(a)$ induces $\binom{k}{2}$ disjoint triangles (one for each color except red).
    
        We insert the anchor vertices $A$ one by one in an arbitrary sequence while maintaining the validity of the drawing. Let $a \in A$ be the next anchor vertex in the sequence and
        let $\epsilon_a$
        be small enough such that the disk with center $\Gamma_H(a)$ does not contain any other vertex whose position we have fixed and that 
        \begin{equation*}
            \disk{\Gamma_H(a)}{\epsilon_a} \cap \convexhull(\disk{\g{m^i_j}}{\epsilon}, \disk{\g{m^{i'}_{j'}}}{\epsilon}) = \emptyset
        \end{equation*}
        for all distinct $i, i' \in [k], j \in [|V^i|], j' \in [|V^{i'}|]$.
        The latter condition serves that we do not intersect with the ``tunnels of visibilies'' discussed in \cref{section:drawing_global_blocking_edges}.
        Finally, we draw each triangle entirely inside $\disk{\Gamma_H(a)}{\epsilon_a}$, ensuring $a$ is drawn on the inner face of each triangle, and each triangle is colored in a distinct color not equal to red. See \cref{figure:choice_gadget} for 
        an example of the local blocking edges associated with the three anchor vertices of a choice gadget drawn inside a choice gadget.

\subsection{Putting it All Together}
\label{section:correctness_proof}

\begin{figure}[t]
    \centering
    \includegraphics[page=7,width=\linewidth]{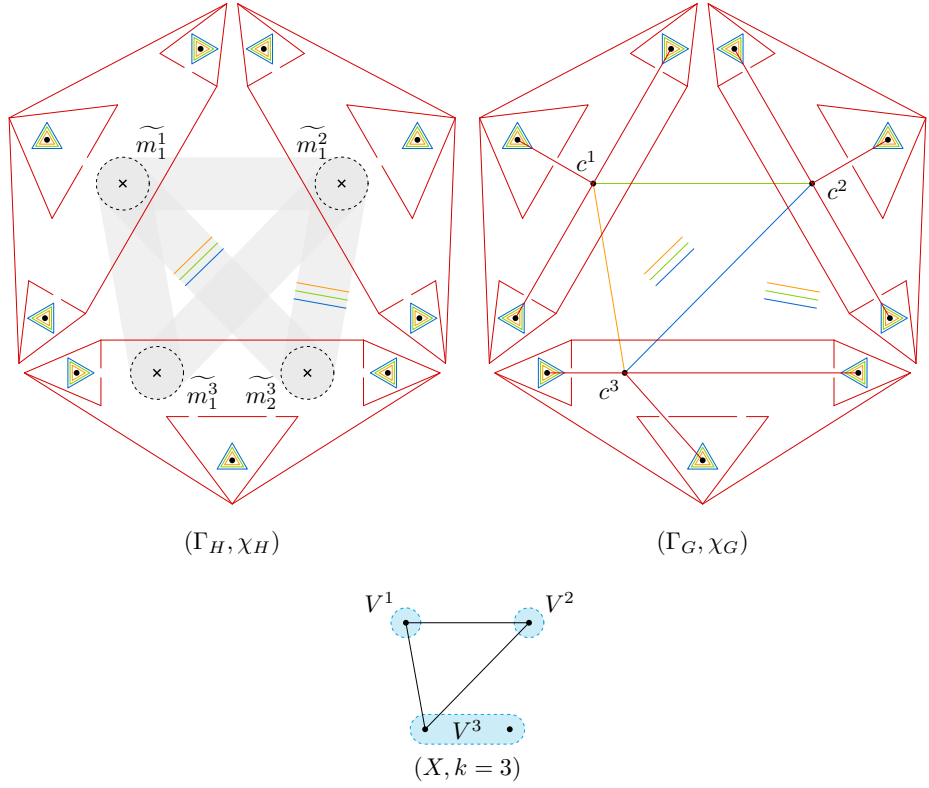}
    \vspace{-0.2cm}\caption{An instance of \textsc{Multicolored Clique} (bottom), the resulting instance of \textsc{GTE}, and a witness showing we have a positive instance (right). 
    Note that the position of anchor vertices and the value of $\epsilon$ are slightly tweaked to increase readability. }
    \label{figure:w1_reduction_example}
\end{figure}

Finally, we are ready to prove the hardness part of \cref{thm:xpwextend}:
Clearly our reduction can be computed in polynomial time.
For a concrete example of the whole reduction, see \cref{figure:w1_reduction_example}.
It remains to ensure our instance mapping is correct:
\begin{lemma}\label{lemma:w1_reduction_correctness}
    The instance $(X, V^1 \cupdot V^2 \cupdot \dots \cupdot V^k, k)$ of \textsc{Multicolored Clique} is positive  if and only if $(H, G, (\Gamma_H, \chi_H), \ell)$ is a positive instance of \textsc{Geometric Thickness Extension}.
\end{lemma}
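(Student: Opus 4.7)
The plan is to establish both directions of the biconditional using \cref{lemma:regions_in_disks_and_disks_disjoint}, \cref{lemma:anchor_segments}, and \cref{lemma:global_blocking}.

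For the forward direction, from a $k$-clique $\Set{x_{j_1}^1, \ldots, x_{j_k}^k}$ in $X$, I would construct an extension by placing each clique vertex $c^i$ at $\g{m_{j_i}^i}$ (which lies inside $\g{r_{j_i}^i}$ by \cref{lemma:regions_in_disks_and_disks_disjoint}) and coloring each anchor-edge $c^i a_l^i, c^i a_d^i, c^i a_r^i$ red. The cone construction of the $i$-th choice gadget ensures these red edges pass through the designated holes in the left, right, and downward red triangles without producing any monochromatic crossing, and the non-red local blocking triangles, being contained in small disks around the anchors disjoint from all tunnels of visibility, cannot interfere. Each clique edge $c^i c^{i'}$ then receives its unique pair-specific color; since $x_{j_i}^i x_{j_{i'}}^{i'} \in E(X)$, \cref{lemma:global_blocking} guarantees the segment $c^i c^{i'}$ crosses no global blocking edge, and distinct pair-colors prevent same-color clique-edge crossings.

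For the backward direction, given a valid extension $(\Gamma_G, \chi_G)$, the plan is to recover a candidate vertex per color class in three steps. First, each anchor-edge $c^i a$ must be red: $a$ sits in the inner face of $\binom{k}{2}$ local blocking triangles spanning all non-red colors, so any edge entering $a$ from outside the small enclosing disk $\disk{\Gamma_H(a)}{\epsilon_a}$ must cross each such triangle, ruling out every non-red color. Second, since all three anchor-edges at $c^i$ are thus red, \cref{lemma:anchor_segments} places $c^i$ inside exactly one region $\g{r_{j_i}^i}$ (the regions being disjoint by \cref{lemma:regions_in_disks_and_disks_disjoint}), providing the candidate $x_{j_i}^i \in V^i$. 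Third, for each pair $i < i'$ I would show $x_{j_i}^i x_{j_{i'}}^{i'} \in E(X)$ by contradiction: if it were a non-edge, \cref{lemma:global_blocking} applied to the segment $c^i c^{i'}$ produces a same-colored crossing with a global blocking edge in every non-red color, forcing $\chi_G(c^i c^{i'})$ to be red.

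The main obstacle will be ruling out red as a valid color for the clique edges in the last step. The plan is to exploit the regular $k$-gon arrangement of the choice gadgets together with the small size of each region $\g{r_j^i}$, which by \cref{lemma:regions_in_disks_and_disks_disjoint} is contained in a disk of radius $\epsilon$ centered at $\g{m_j^i}$: a careful case analysis on the polygon structure should show that any straight segment between two points in distinct regions on different polygon sides necessarily intersects a red edge of the drawing---either a red triangle edge of an intermediate gadget whose body lies across the segment's trajectory, or a red anchor-edge incident to another clique vertex whose gadget the segment passes through. Once this geometric fact is established, $\chi_G(c^i c^{i'})$ cannot be red either, and the resulting contradiction forces $x_{j_i}^i x_{j_{i'}}^{i'} \in E(X)$ for every pair $i < i'$, so $\Set{x_{j_1}^1, \ldots, x_{j_k}^k}$ forms the sought $k$-clique.
\qed
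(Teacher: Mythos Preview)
Your overall structure matches the paper's proof closely: both directions invoke \cref{lemma:anchor_segments}, \cref{lemma:global_blocking}, and \cref{lemma:regions_in_disks_and_disks_disjoint} at the same points, and the forward direction is essentially identical.

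The one real divergence is in the step you flag as the main obstacle---ruling out red for the clique edges $c^ic^{i'}$. The paper does this in one line: each choice gadget is \emph{completely enclosed by red segments} (the three red triangle paths of gadget $i$ jointly form a closed red boundary around the region containing $\g{r_1^i},\dots,\g{r_{|V^i|}^i}$), and distinct gadgets are disjoint. Since $c^i$ lies inside gadget $i$ and $c^{i'}$ lies outside it, the segment $c^ic^{i'}$ must cross that red perimeter, so it cannot be colored red.

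Your proposed route---a case analysis on intermediate gadgets or on red anchor edges of \emph{other} clique vertices---is more involved and, as stated, does not cover the case of two gadgets that are adjacent along the $k$-gon: there is no intermediate gadget between them, and the segment $c^ic^{i'}$ need not meet any third gadget or any anchor edge of a third clique vertex at all. What actually forces a red crossing in that case is precisely the red boundary of the source (or target) gadget itself, which your case list omits. Replacing the case analysis by the enclosure observation closes this gap and simplifies the argument considerably.
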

\begin{proof}
$(\Rightarrow) \colon$
Let $C_X$ be a clique in $X$ where each vertex is in a different $V^i$. We need to construct an extension $(\Gamma_G, \chi_G)$ of $(\Gamma_H, \chi_H)$ that is a \drawing{\ell} of $G$.
For this, it suffices to assign positions to all clique vertices $c^i$ and edge colors to all their incident edges in $G$, and to show that this does not introduce monochromatic crossings nor too many colors.

First, we assign positions:
Let $i \in [k]$ and let $j$ such that $x^i_j \in V(C_X)$.
We set $\Gamma_G(c^i) \coloneqq \g{m^i_j}$.
Next, we assign edge colors:
Color all new edges incident to an anchor vertex red ($=0$), and assign each of the remaining $\binom{k}{2}$ new clique edges a distinct non-red color in $[\binom{k}{2}]$. As $\chi_H$ uses at most colors of $\Set{0} \cup [\binom{k}{2}]$, $\chi_G$ uses at most $\ell$ colors.

We claim that $(\Gamma_G, \chi_G)$ satisfies the conditions of this lemma.
Observe that $\Gamma_G(c^i) \in \g{r_j^i}$.
Hence, by \cref{lemma:anchor_segments}, all red  edges we introduced do not cross a red edge of $(\Gamma_H, \chi_H)$.
Now, consider a new edge $e = c^ic^{i'}$ between two clique vertices. Note that $e$ is non-red, hence it suffices to argue that $e$ crosses neither global nor local blocking edges.
By construction, there are $j, j'$ such that
$x_j^ix_{j'}^{i'} \in E(X)$ and $\Gamma_G(x_{j}^{i}) = \g{m_{j}^{i}} \in \g{r_{j}^{i}}$ and $\Gamma_G(x_{j'}^{i'}) = \g{m_{j'}^{i'}} \in \g{r_{j'}^{i'}}$.
Hence, by \cref{lemma:global_blocking}, $e$ does not cross global blocking edges in $\Gamma_H$.
Since all local blocking edges were positioned to be outside of the convex hull of $\disk{\g{m_{j}^{i}}}{\epsilon}$ and $\disk{\g{m_{j'}^{i'}}}{\epsilon}$,
and by \cref{lemma:regions_in_disks_and_disks_disjoint} both endpoints of $e$ are contained in said hull,
we conclude that $e$ does not cross any local blocking edges.

$(\Leftarrow) \colon$
Let $(\Gamma_G, \chi_G)$ be an extension of $(\Gamma_H, \chi_H)$ that is a \drawing{\ell} of $G$.

Consider a clique vertex $c^i \in C$.
Towards a contradiction, suppose there is no $j \in [|V^i|]$ such that $\Gamma_G(c^i) \in {r_j^i}'$.
Then, by \cref{lemma:anchor_segments}, one of $a^i_lc^i$, $a^i_dc^i$, $a^i_rc^i$ crosses a red segment in $(\Gamma_H, \chi_H)$. Without loss of generality, we assume it is $a^i_lc^i$.
But then, $a^i_lc^i$ must also cross all of the local blocking edges of ${\bl}({a^i_l})$, forcing $\chi_G$ to color this edge red. Hence, $(\Gamma_G, \chi_G)$ contains a monochromatic crossing, a contradiction. Therefore, there is $j \in [|V^i|]$ such that $\Gamma_G(c^i) \in \g{r_j^i}$, and because of \cref{lemma:regions_in_disks_and_disks_disjoint}, $j$ is unique.

We apply this reasoning to all $i \in [k]$ and obtain a function $j \colon [k] \to \bigcup_{i\in[k]} [V^i]$ that selects for each $i \in [k]$ the index of a vertex from $V^i$.
It remains to show that the selected vertices form a clique in $X$.
Let $i, i' \in [k]$ be distinct.
We aim to show that $x^{i}_{j(i)}x^{i'}_{j(i')} \in E(X)$.
Note that by the previous argument $\Gamma_G(c^{i}) \in \g{r^{i}_{j(i)}}$ and $\Gamma_G(c^{i'}) \in \g{r^{i'}_{j(i')}}$.
Observe that each choice gadget in $(\Gamma_G, \chi_G)$ is completely enclosed by red segments, and that all choice gadgets are disjoint.
Hence, $\chi_G$ cannot assign red to $c^ic^{i'}$.
Towards a contradiction, suppose that $x^{i}_{j(i)}x^{i'}_{j(i')} \not\in E(X)$.
Then, by \cref{lemma:global_blocking}, $c^ic^{i'}$ crosses all global blocking edges of this non-edge of $X$.
Hence, $c^ic^{i'}$ cannot have any of the blocking edges colors, and $\chi_G$ must color $c^ic^{i'}$ red, contradicting that $\chi_G$ does not color $c^ic^{i'}$ red.
\qed\end{proof}

\cref{thm:xpwextend} now follows directly from the two proofs at the end of this and the previous subsection.

\xpwextend*

} %

\toappendix{
\section{\NP-Hardness via the Vertex Deletion Distance}
\label{section:gte_by_vertices_para_np}

Finally, we establish \cref{thm:npextend} by showing that \textsc{GTE} parameterized by the number of missing vertices $k$ is \NP-hard, even for $k = 2$.
Similar to \cite{depian2024parameterized}, we reduce from \textsc{3-SAT} while reusing the machinery developed in the previous subsection.

Consider an instance $(X, C)$ of \textsc{3-SAT} where $X = \Set{x_1, \dots, x_{n}}$ is a set of $n$ variables and $C = \Set{c_1, \dots, c_m}$ is a set of $m$ clauses with three literals each. 
 We construct an instance $(H, G, (\Gamma_H, \chi_H), \ell)$ of \textsc{GTE} as follows (see \cref{section:details_paranp} for the precise formal definition).
The target graph $G$ consists of (apart from the gadgets we will describe next) two stars:
One star with center vertex $t$ (the \emph{truth assignment vertex}) and leaves $X$, and one star with center vertex $v$ (the \emph{verification vertex}) and leaves $C$.
The graph $H$ is obtained by deleting $t$ and $v$ from $G$.
In a slight abuse of notation, we use the literals over $X$, i.e.\ $x$ or $\neg x$ for each $x \in X$, as colors for the edges of $G$ and $H$, with ``red'' as one additional color.

Intuitively, we will force both missing vertices to be drawn at two known locations. An edge from the truth assignment vertex to a variable $x$ will only be able to be colored in either $x$ or $\neg x$, inducing a variable assignment.
Due to our construction, whatever color is chosen will not be available for the edges of the verification vertex, where an edge from the verification vertex to a clause $l_1 \lor l_2 \lor l_3$
will need to be colored by one of $\neg l_1, \neg l_2, \neg l_3$, i.e., the negation of a literal that satisfies the clause.
See \cref{figure:paranp} for an example of the reduction.

To this end, we use a square to construct the \drawing{\ell} $(\Gamma_H, \chi_H)$ of $H$.
The vertices $C$ are placed on the square's left side and the vertices $X$ on the square's bottom side. Furthermore, we place a choice gadget (cf.~\cref{section:gte_by_edges_w1_hard}) on the square's top and right side, forcing $v$ and $t$ to be drawn in a small region on the top side ($r_t$) and the right side ($r_v$) respectively.
Note that we use the choice gadgets only to constrain these two vertices to two known positions, not to model a choice.

Finally, we add colorful triangles to $(\Gamma_H, \chi_H)$ around the vertices $X \cup C$ to restrict each $e \in E(G) \setminus E(H)$ in an extension of $(\Gamma_H, \chi_H)$ to a set of \emph{allowed colors} $a(e)$.
For an edge $tx$ with $x \in X$ we set $a(tx) \coloneqq \Set{x, \neg x}$ and 
for an edge $vc$ with $c = l_1 \lor l_2 \lor l_3$, we set $a(vc) \coloneqq \Set{ \neg l_1, \neg l_2, \neg l_3 }$.
The number of allowed colors $\ell$ is set to $2|X|+1$.

\begin{figure}[t]
    \centering
    \includegraphics[page=10]{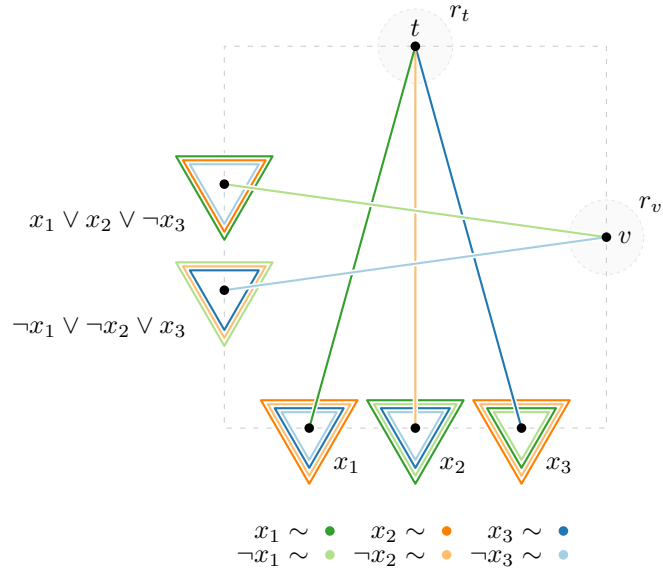}
    \vspace{-0.2cm}\caption{Example of a positive instance of \textsc{GTE} resulting from our reduction from \textsc{3-SAT}. The satisfying assignment is $x_1 \mapsto 1, x_2 \mapsto 0, x_3 \mapsto 1$. Note that the choice gadgets are not fully drawn.}
    \label{figure:paranp}
\end{figure}

The following lemma characterizes the combinatorial properties our construction ensures:
Adding the two missing vertices to $(\Gamma_H, \chi_H)$
yields a \drawing{\ell} iff the two choice gadgets are respected, the allowed colors $a(\cdot)$ are respected, and for each color used in one star, it may not be used in the other:

\begin{restatable}{lemma}{paranpdrawingtocombinatorial}
\label{lemma:para_np_drawing_to_combinatorial}
Let $\Gamma_G$ be an extension of $\Gamma_H$ to $G$ 
and let $\chi_G$ be an extension of $\chi_H$ to $E(G)$.
Then, $(\Gamma_G, \chi_G)$ is a \drawing{\ell} of $G$ if and only if 
    (i) $\Gamma_G(t) \in r_t$ and $\Gamma_G(v) \in r_v$, 
    (ii) $\chi_G(e) \in a(e)$ for all $e \in E(G) \setminus E(H)$, and,
    (iii) $\chi_G(\Set{tx \mid x \in X}) \cap \chi_G(\Set{vc \mid c \in C}) = \emptyset$.
\end{restatable}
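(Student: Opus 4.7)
The plan is to prove the two directions of the equivalence separately, reusing both the choice-gadget machinery from the $\W{1}$-hardness construction and the square-shaped layout. The forward direction hinges on applying \cref{lemma:anchor_segments} twice, on reading off the color restrictions imposed by the colorful triangles around $X\cup C$, and on a Jordan-curve-style argument that every $tx$-edge crosses every $vc$-edge inside the square. The backward direction is then essentially a verification that conditions~(i)--(iii) together rule out every potential monochromatic crossing.

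For the forward direction ($\Rightarrow$), assume $(\Gamma_G, \chi_G)$ is a valid \drawing{\ell} of $G$. For~(i), I invoke \cref{lemma:anchor_segments} on each of the two choice gadgets placed on the top and right sides of the square: the three edges from $t$ to its anchor vertices must be colored red, since any other color would create a monochromatic crossing with one of the local blocking triangles around these anchors, and \cref{lemma:anchor_segments} then forces $\Gamma_G(t)\in r_t$; the analogous argument yields $\Gamma_G(v)\in r_v$. For~(ii), the triangles encircling each $x\in X$ are drawn in every color outside $\{x,\neg x\}$, so any color for the edge $tx$ outside $a(tx)$ would cross a triangle edge of its own color; the symmetric construction around each $c\in C$ forces $\chi_G(vc)\in a(vc)$. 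For~(iii), I exploit the geometric layout: with $X$ on the bottom side of the square, $C$ on the left side, $\Gamma_G(t)\in r_t$ on the top, and $\Gamma_G(v)\in r_v$ on the right, every segment $tx$ joins a point on the top side to a point on the bottom side, while every segment $vc$ joins a point on the left side to a point on the right side; inside a convex region any two such segments must intersect. A color shared by some $tx$ and some $vc$ would therefore yield a monochromatic crossing, contradicting validity.

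For the backward direction ($\Leftarrow$), assume (i), (ii), and (iii). Since $(\Gamma_H, \chi_H)$ is monochromatic-crossing-free, it suffices to rule out monochromatic crossings involving at least one new edge. Crossings between a new edge $e$ incident to $t$ or $v$ and an edge of $H$ are excluded by~(ii): the allowed set $a(e)$ was defined so that every color in $a(e)$ avoids all pre-existing edges that $e$ could geometrically cross (the local triangles around $e$'s endpoint in $X\cup C$, and the remaining drawing, use colors disjoint from $a(e)$ on the corresponding plane regions). Two edges sharing $t$ (or $v$) as endpoint cannot cross by elementary geometry. The only remaining case is a pair $(tx, vc)$, which may indeed cross in the interior of the square, but~(iii) guarantees their colors differ.

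The main technical obstacle lies in the forward direction for~(iii): carefully formalizing the claim that, given $\Gamma_G(t)\in r_t$ and $\Gamma_G(v)\in r_v$, \emph{every} pair $(tx, vc)$ crosses inside the square. This rests on the placements of $r_t, r_v, X, C$ being strictly separated along the four sides of a convex region, so that the crossing follows from a Jordan-curve argument applied to the segment $tx$ as a barrier between left and right. A secondary obstacle is ensuring the colorful triangles around each vertex of $X\cup C$ can be placed so as to leave ``gaps'' only in the directions corresponding to $a(e)$; this is, however, a direct adaptation of the local blocking construction already verified in the $\W{1}$-hardness proof.
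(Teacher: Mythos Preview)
Your proposal is correct and follows essentially the same approach as the paper's proof sketch: both directions hinge on \cref{lemma:anchor_segments} for the choice gadgets, the local-blocking triangles for the color restrictions, and the top--bottom versus left--right geometry of the square for the mutual crossings of the two stars. The one place to tighten is your backward case analysis: the edges from $t$ and $v$ to their respective anchor vertices are also new edges (not of the form $tx$ or $vc$), and ruling out their monochromatic crossings with the red choice-gadget triangles requires invoking~(i) together with \cref{lemma:anchor_segments}, not just~(ii).
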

\begin{proof}[Sketch]
$(\Rightarrow) \colon$
Assume $(\Gamma_G, \chi_G)$ is a \drawing{\ell} of $G$.
Then, since the local blocking edges of the anchor vertices force the edges of $v$ (resp. $t$) to be red, by sightly adapting \cref{lemma:anchor_segments} to the changed notation, we obtain Property (i).
Property (ii) follows directly from the manner we inserted local blocking edges around the vertices $X \cup C$.
Property (iii) is implied because the edges of $t$ (resp. $v)$ near completely ($\epsilon$ is small enough) ``cut through'' the square vertically (resp. horizontally), hence the ``horizontal visibility'' (resp. ``vertical visibility'') is blocked.

$(\Leftarrow) \colon$
Assume $(\Gamma_G, \chi_H)$ fulfills Properties (i)-(iii).
By construction, no edges of $H$ induce monochromatic crossings.
It thus remains to consider the edge set $E(G) \setminus E(H)$.
Using Property (i), \cref{lemma:anchor_segments}, the fact that each edge gadget is enclosed with a red ``perimeter'', and the construction of the local blocking edges around the anchor vertices, we can deduce that the drawing contains no red crossings.
Property (ii) implies that there are no monochromatic crossings with local blocking edges.
The only monochromatic crossings left to consider are between edges incident to $v$ and edges incident to $t$. But these two edge sets are colored with two disjoint color sets by Property (iii). Hence, $(\Gamma_G, \chi_G)$ is a \drawing{\ell} of $G$.
\qed\end{proof}

Using this lemma, we can prove our final theorem:

\setcounter{oldtheorem}{\value{theorem}}
\setcounter{theorem}{\the\numexpr\getrefnumber{thm:npextend}-1\relax}

\begin{restatable}{theorem}{npextendstar}
\GTE is \NP-hard even if the drawn subgraph can be obtained from the input graph by deleting only two vertices.
\end{restatable}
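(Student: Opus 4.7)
The plan is to argue NP-hardness by showing the reduction already described is correct: the 3-SAT instance $(X, C)$ is satisfiable if and only if the constructed GTE instance $(H, G, (\Gamma_H, \chi_H), \ell)$ is positive. Polynomial-time computability of the reduction follows directly from its specification, since the number of vertices, edges, and triangles scales linearly in $|X| + |C|$ and all coordinates can be chosen with bit-length polynomial in the input size (using appropriately scaled versions of the choice gadgets, as in \cref{section:gte_by_edges_w1_hard}). The heavy lifting is already done by \cref{lemma:para_np_drawing_to_combinatorial}, which reduces the geometric question to a purely combinatorial one about colors assigned to the $|X|+|C|$ new edges.

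For the forward direction, suppose $\alpha \colon X \to \{0,1\}$ satisfies all clauses. I will extend $(\Gamma_H, \chi_H)$ as follows. Place $\Gamma_G(t)$ at any point of $r_t$ and $\Gamma_G(v)$ at any point of $r_v$; this secures condition (i) of \cref{lemma:para_np_drawing_to_combinatorial}. For each variable $x \in X$, set $\chi_G(tx) \coloneqq x$ if $\alpha(x) = 1$ and $\chi_G(tx) \coloneqq \neg x$ otherwise. For each clause $c = l_1 \vee l_2 \vee l_3$, pick a literal $l_i$ of $c$ that is satisfied by $\alpha$ (at least one exists) and set $\chi_G(vc) \coloneqq \neg l_i$. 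By construction $\chi_G(e) \in a(e)$ for every new edge, giving condition (ii). For condition (iii), observe that if $l_i = x$ for some $x \in X$ then $\alpha(x) = 1$ forces $\chi_G(tx) = x = l_i$, while if $l_i = \neg x$ then $\alpha(x) = 0$ forces $\chi_G(tx) = \neg x = l_i$; either way $\chi_G(tx_i) = l_i \neq \neg l_i = \chi_G(vc)$, and since all star-edges at $t$ use colors from $\{x, \neg x \mid x \in X\}$ while the star-edges at $v$ only use the colors $\neg l_i$ chosen above, the two color sets are disjoint.

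For the converse, let $(\Gamma_G, \chi_G)$ be a valid extension. By \cref{lemma:para_np_drawing_to_combinatorial} properties (i)--(iii) hold. Define $\alpha(x) \coloneqq 1$ if $\chi_G(tx) = x$ and $\alpha(x) \coloneqq 0$ if $\chi_G(tx) = \neg x$; this is well-defined since $a(tx) = \{x, \neg x\}$ by property (ii). To verify $\alpha$ satisfies $(X, C)$, fix a clause $c = l_1 \vee l_2 \vee l_3$. By property (ii), $\chi_G(vc) = \neg l_i$ for some $i \in \{1,2,3\}$. Let $x_i$ be the underlying variable of $l_i$; then property (iii) forbids $\chi_G(tx_i) = \neg l_i$, and combined with property (ii) this gives $\chi_G(tx_i) = l_i$. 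By the definition of $\alpha$, this assignment makes $l_i$ true, and hence $c$ is satisfied.

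The only technically delicate point in the whole argument is already absorbed into \cref{lemma:para_np_drawing_to_combinatorial}, in particular property (iii), which crucially relies on the geometric observation that once $t$ and $v$ are confined to $r_t$ and $r_v$ respectively, the star at $t$ cuts the square vertically while the star at $v$ cuts it horizontally, so any pair of same-colored edges from the two stars would inevitably cross. The remaining task, which I have sketched above, is a routine translation between satisfying assignments and valid colorings.
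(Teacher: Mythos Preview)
Your proof is correct and follows essentially the same approach as the paper's own proof: both directions invoke \cref{lemma:para_np_drawing_to_combinatorial} to translate between satisfying assignments and valid colorings, with the same encoding ($\chi_G(tx)$ records the truth value of $x$, and $\chi_G(vc)$ records the negation of a satisfied literal). The only cosmetic difference is that you verify condition~(iii) by arguing directly that the color of each $vc$-edge differs from the color of the corresponding $tx$-edge, whereas the paper phrases this as a proof by contradiction.
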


\setcounter{theorem}{\value{oldtheorem}}

\begin{proof}
The reduction can clearly be computed in polynomial time.
It remains to show its correctness, that is, the instance $(X, C)$ of \textsc{3-SAT} is positive if and only if $(H, G, (\Gamma_H, \chi_H), \ell)$ is a positive instance of \textsc{Geometric Thickness Extension}.

$(\Rightarrow) \colon$
Let $\sigma \colon X \to \Set{0,1}$ be a variable assignment that satisfies $C$.
First, we extend $\Gamma_H$ to a drawing of $G$:
As $r_v, r_t$ are non-empty by construction, we can find $\Gamma_G$ such that $\Gamma_G(v) \in r_v$ and $\Gamma_G(t) \in r_t$.
Second, we extend $\chi_H$ to an edge-coloring of $G$ called $\chi_G$:
Consider the edges of the star with center vertex $t$. Let $x \in X$. We set $\chi_G(tx)$ to $x$ if $\sigma(x) = 1$ and $\neg x$ otherwise.
Next, consider the edges of the star with center vertex $v$. Let $c \in C$ and let $l$ be a literal of $c$ that evaluates to 1 under $\sigma$.
We set $\chi_G(vc)$ to $\neg l$.
Suppose there is a literal $l \in \chi_G(\Set{tx \mid x \in X}) \cap \chi_G(\Set{vc \mid c \in C})$.
Then, by construction, $l$ as well as $\neg l$ evaluate to 1 under $\sigma$, a contradiction.
Hence, we have satisfied the three conditions of \cref{lemma:para_np_drawing_to_combinatorial}, thereby completing our proof.

$(\Leftarrow) \colon$
Let $(\Gamma_G, \chi_G)$ be a \drawing{\ell} of $G$ and let $\sigma \colon X \to \Set{0,1}$ with $x \mapsto 1$  if and only if $\chi_G(tx) = x$.
We claim that $\sigma$ satisfies $C$.
Let $l_1 \lor l_2 \lor l_3 \in C$. By \cref{lemma:para_np_drawing_to_combinatorial}, 
we have $\chi_G(vc) \in a(vc) = \Set{\neg l_1, \neg l_2, \neg l_3 }$. Without loss of generality, assume $\chi_G(vc) = \neg l_1$ and let $x$ be the variable underlying $\neg l_1$.
By \cref{lemma:para_np_drawing_to_combinatorial},
$\chi_G(tx) \in a(tx) = \Set{x, \neg x} = \Set{l_1, \neg l_1}$. Suppose $\chi_G(tx) = \neg l_1$. Then 
$\neg l_1 \in \chi_G(\Set{tx \mid x \in X}) \cap \chi_G(\Set{vc \mid c \in C})$, contradicting \cref{lemma:para_np_drawing_to_combinatorial}.
Hence $\chi_G(tx) = l_1$, implying $l_1$ is true under $\sigma$ and the clause is satisfied.
\qed\end{proof}

\subsection{Details of the \paraNP-Hardness Proof}
\label{section:details_paranp}

In this subsection, we make the construction of $(\Gamma_H, \chi_H)$ from \cref{section:gte_by_vertices_para_np} precise. In particular, we describe how to adapt the construction used in the \W{1} hardness proof 
(\cref{section:gte_by_edges_w1_hard})
 using the outline given in \cref{section:gte_by_vertices_para_np}.
We construct the two choice gadgets with the number of choices set to one at scale 
$s = 1$ and using $\epsilon = 0.5$. 
Instead of constraining the position of clique vertices, one choice gadget constrains the position of $t$ to $r_t$, and an the other constrains the position of $v$ 
 to $r_v$.

The square we use as a template to attach the choice gadgets to has side length $10$, that is, the ``width'' of a choice gadget at scale $s = 1$.
We position the vertices $C$ arbitrarily on the interior of the left side of the square, and do likewise for the vertices $X$ on the bottom side.

We do not insert global blocking edges.
We modify the insertion of local blocking edges as follows:
For each of the 6 anchor vertices of the choice gadget, we insert one colored triangle for all colors besides red.
For each vertex $w \in X \cup C$, we
insert one colored triangle for each of the colors $X \setminus a(w)$.
When placing local blocking edges, instead of avoiding the ``tunnels of visibility'', we avoid the ``cones of visibility'' that arise from each vertex position of $X \cup C$ and the two disks of radius $\epsilon$ enclosing $r_t$ and $r_v$.
} %

\section{Concluding Remarks}
\looseness=1
Our investigation provides the first systematic investigation of the parameterized complexity of computing the geometric thickness of graphs, a fundamental concept which has been studied from a variety of perspectives since its introduction over sixty years ago. The main open questions that arise from our work are resolving the complexity of \GT\ when parameterized by treewidth $\texttt{tw}$ and treedepth $\texttt{td}$---in particular, does there exist a computable function $f$ such that \GT\ can be solved at least in time $|V(G)|^{f(\texttt{tw})}$ or $|V(G)|^{f(\texttt{td})}$?

The main obstacle in the way of obtaining such algorithms seems to be a lack of understanding of solutions for these ``well-structured'' graphs. In particular, if one could show that every $k$-treewidth graph with geometric thickness $\ell$ admits a \drawing{\ell} with some ``suitable combinatorial properties'', this would open the door towards solving \GT\ via the classical dynamic programming approach typically used on graphs of bounded treewidth. On the other hand, if one could show that even bounded-treewidth graphs may require the use of a wide range of ``combinatorially ill-behaved'' \drawing{\ell}s, this would likely open the door to establishing hardness. The issue described above is essentially the reason why several other prominent graph drawing problems remain open when parameterized by treewidth and treedepth, with examples including the computation of \emph{stack}, 
\emph{queue} and \emph{obstacle numbers} of graphs~\cite{BhoreGMN20,GanianMNZ21,Zehavi22,BalkoCG00V022,BhoreGMN22}; the first of these is in fact equivalent to the variant of \GT\ where vertices are forced to lie in convex position.
To make this issue more concrete, we remark that the staple approach for finding a treedepth-kernel (see e.g. \cite{bannister2013parameterized}), that is, contracting repeated subtrees with a similar structure into single vertices, cannot be applied in a straightforward manner, as e.g., there may be unrelated interfering edges passing through the to-be contracted subgraph in a given drawing.

As an alternative approach to identify meaningful tractable fragments, we investigated \GT in the extension setting.
Surprisingly, the problem turned out to be \NP-complete even in the case where only two vertices are missing from a given solution.
For \GTE, future work could target the question of whether there are natural circumstances under which one can achieve fixed-parameter tractability even if vertices are missing from the partial drawing---for instance, is the problem \FPT\ when parameterized by the \emph{vertex+edge deletion distance}~\cite{EibenGHKN20,EibenGHKN20b,depian2024parameterized} plus the number of layers?

\bibliographystyle{splncs04}
\bibliography{refs}

\ifshort
\setbool{inappendix}{true}
\appendix
\clearpage

\section*{Appendix}
\appendixText
\fi

\end{document}